\documentclass[12 pt]{extarticle}
\usepackage{amsmath}
\usepackage{amssymb}
\usepackage[a4paper, total={6in, 9in}]{geometry}
\usepackage{amsthm,url,tikz}
\usepackage[utf8]{inputenc}
\usepackage[english]{babel}
\usepackage{graphicx}
\usepackage{multirow} 
\usepackage{makecell}
\usepackage{hyperref}
\usepackage{adjustbox}
\numberwithin{equation}{section}
\usepackage{makecell}
\usepackage{bm}
\usepackage{xcolor}
\newtheorem{theorem}{Theorem}
\newtheorem{proposition}{Proposition}
\newtheorem{corollary}{Corollary}
\newtheorem{lemma}{Lemma}
\newtheorem{definition}{Definition}

\newtheorem{example}{Example}

\usepackage{enumerate}
\usepackage{array}
\usepackage[nottoc,notlot,notlof]{tocbibind} 
\title{ Function-Correcting Codes for Linear and Locally Bounded Functions Over a Finite Chain Ring }
\author{Gyanendra K. Verma\footnote{ The author is financially supported by the ANRF-NPDF grant PDF/2025/002309. \\ email: gkvermaiitdmaths@gmail.com} \\ Department of Electrical Communication Engineering,\\ Indian Institute of Science Bangalore, Karnataka 560012, India.\\\\
 Abhay Kumar Singh\footnote{email: abhay@iitism.ac.in.}\\Department of Mathematics and Computing,\\ Indian Institute of Technology (ISM) Dhanbad, Jharkhand - 826004, India}
\date{}

\begin{document}
	\maketitle
\begin{abstract}
Redundancy is important in error correction because it adds extra information that enables the detection and correction of errors when data becomes corrupted. However, determining the optimal amount of redundancy is often a challenging problem in coding theory. To address this, a new coding framework called function-correcting codes has been introduced. Function-correcting codes are a significant development in which the codes are designed to protect only the function values of the message data rather than the entire message itself. This approach allows reliable recovery of the desired output while reducing the required redundancy, leading to more efficient error correction. Recently, a theoretical framework for systematic function-correcting codes (FCCs) designed for channels equipped with a homogeneous metric has been developed. 

In this paper, we further extend the study of function-correcting codes in the homogeneous metric over a chain ring $\mathbb{Z}_{2^s}$ for broader classes of functions, namely, locally bounded functions and linear functions, and for weight functions, modular sum functions.  
We define locally bounded functions in the homogeneous metric over $\mathbb{Z}_{2^s}^k$ and investigate the locality of weight functions. We derive a Plotkin-like bound for irregular homogeneous distance code over $\mathbb{Z}_4$, which improves the existing bound. Using locality properties of functions,  we establish upper and lower bounds on the optimal redundancy. We provide several explicit constructions of function-correcting codes for locally bounded functions, weight functions, and weight distribution functions.  Using these constructions, we further discuss the tightness of the derived bound. We explicitly derive a Plotkin-like bound for linear function-correcting codes that  reduces to the classical Plotkin
bound when the linear function is bijective, we further discuss a construction of
function-correcting linear codes over $\mathbb{Z}_{2^s}$.
\end{abstract}
\textbf{Keywords}: Function-correcting codes, error-correcting codes, homogeneous distance, redundancy.\\ 
	\textbf{Mathematics subject classification}: 94B60, 94B65.\\

\section{Introduction}
The problem of minimizing redundancy subject to reliability and decoding constraints constitutes a major challenge in coding theory. This challenge motivated Lenz et al. \cite{Lenz2023} to introduce a new coding framework, termed function-correcting codes, which aims to reduce redundancy in scenarios where only specific functions of the data, rather than the data itself, must be recovered. Designing efficient function-correcting codes requires careful consideration of the underlying communication channel (or metric) and the target functions, as these factors are essential for redundancy optimization and for deriving fundamental bounds. Every systematic error-correcting code is, in particular, a function-correcting code for arbitrary functions defined on the underlying alphabet set. Codes over rings are natural generalizations of classical linear codes over fields and often capture algebraic structure that field-based models fail to reflect, especially in the presence of constraints or nonstandard noise. They provide a unified framework for describing important families of nonlinear binary codes and for designing codes matched to modulation schemes (e.g., phase-modulated constellations) and memory or storage architectures. Moreover, studying codes over rings enriches the theory of duality, weight enumerators, and MacWilliams-type identities, leading to new structural insights and potentially more powerful coding schemes. In \cite{Wood2007}, Wood investigated code equivalence over finite rings and proved that any finite ring on which MacWilliams’ code equivalence theorem holds for the Hamming weight must necessarily be Frobenius. The homogeneous distance is typically introduced in an intrinsic algebraic or metric framework, without an explicit association to particular channel models in \cite{constantinescu1997}. In this work, we investigate function-correcting codes endowed with the homogeneous distance when transmission occurs over additive white Gaussian noise (AWGN) channels analogous to \cite{liu2025function}. Within the context of codes over finite rings, the homogeneous distance constitutes a more appropriate metric than the classical Hamming distance, as it intrinsically accounts for symbol errors of differing severity. Following the fundamental paper of Lenz et al. \cite{Lenz2023}, the framework of function-correcting codes (FCCs) was extended to the symbol-pair metric over binary fields \cite{Xia2023} and to the 
$b$-symbol metric over finite fields \cite{Singh2025}. Verma and Singh \cite{Verma2025FCCL} introduced the framework of function-correcting codes for the Lee metric. They derived bounds on optimal redundancy and provided constructions of function-correcting codes for several functions. Recently, in \cite{Pandey2026,Rajput2026-2} the authors introduced and studied function-correcting partition codes and function-correcting codes for maximally-unbalanced boolean functions, respectively. By exploiting the significance of the homogeneous metric, Liu \cite{liu2025function} studied the framework of FCCs under the homogeneous metric and established bounds on optimal redundancy. They also presented constructions of function-correcting codes for certain special classes of functions. 

It is evident from the seminal work of Lenz \emph{et al.}~\cite{Lenz2023} that the choice of the target function plays a crucial role in the design of function-correcting codes. In this context, locally bounded functions form an important class, as they enable the optimization of redundancy in such codes. In~\cite{Rajput2025}, the authors provided an explicit construction of function-correcting codes for locally bounded functions under the Hamming distance. They also proposed an upper bound on the redundancy of these codes, derived from the minimum possible length of an error-correcting code with a prescribed number of codewords and a given minimum Hamming distance.

A more general framework for function-correcting codes with locally bounded functions was introduced in~\cite{Verma2026} using the \(b\)-symbol metric. The authors showed that, under certain constraints, a function can be regarded as locally bounded. Furthermore, they investigated the locality and optimal redundancy of function-correcting \(b\)-symbol codes corresponding to the \(b\)-symbol weight function and the \(b\)-symbol weight distribution function for \(b \ge 1\). Table \ref{tabredundancy}, Table \ref{tab2}, and Table \ref{tab3} summarize the metrics considered and the corresponding classes of functions for which function-correcting codes have been studied.

 \begin{table*}[h!]
    \centering 
    \renewcommand{\arraystretch}{1.5}
    \resizebox{\textwidth}{!}{
    \begin{tabular}{|c|c|c|c c|}
    \Xhline{2\arrayrulewidth}
    \makecell{\textbf{Metric}} & \textbf{Locally $(\lambda, \rho)$-function} & \textbf{Optimal redundancy} & \textbf{References} & \\
    \Xhline{2\arrayrulewidth}
    \multirow{4}{*}{ $(\mathbb{F}_2^k,d_H)$} & Locally $(2, 2t)$-function & $r_f^H(k,t) = 2t$ & \cite[Lemma 5]{Lenz2023}&\\
    \cline{2-5}
                         & Locally $(3, 2t)$-function & $r_f^H(k,t) = 3t$ & \cite[Theorem 6]{Verma2026}& \\
    \cline{2-5}
                         & Locally $(4, 2t)$-function & $r_f^H(k,t) \leq  3t$ & \cite[Lemma 3]{Rajput2025}& \\ 
    \cline{2-5}
                         & Locally $(\lambda, 2t)$-function  & $r_f^H(k,t) \leq \lambda t$ & \cite[Theorem 5]{Verma2026}& \\                        
    \Xhline{3\arrayrulewidth}
    \multirow{2}{*}{ $(\mathbb{Z}_m^k,d_L)$, $m\geq 2$}  & Locally $(2, 2t)_L$-function & $r_f^L(k,t) = \left\lceil \frac{2t}{\left\lfloor \frac{m}{2} \right\rfloor} \right\rceil $ & \cite[Proposition 5.6]{Verma2025FCCL} & \\ 
    \cline{2-5}
                        & Locally $(\lambda, 2t)_L$-function & $r^L_f(k,t) \leq \left\lceil \frac{t}{\left\lfloor \frac{m}{2\lambda} \right\rfloor} \right\rceil $, $\lambda\leq \frac{m}{2}$ &\cite[Lemma 16]{rajan2025explicit} & \\
    \cline{2-5}
                        & Locally $(3, 2t)_L$-function & $r_f^L(k,t) = t,$ $m=6$ &\cite[Lemma 17]{rajan2025explicit} & \\
    \Xhline{3\arrayrulewidth}
    \multirow{4}{*}{$(\mathbb{F}_2^k,d_b)$}  & Locally $(2, 2t)$-function, $b=2$ & $2t - 2\leq r_f^b(k,t) \leq 2t - 1$ & \cite[Lemma 14]{Xia2023} &\\
    \cline{2-5}
                & Locally $(2, 2t)$-function, $b\geq 1$ & $ 2(t- b + 1) \leq r_f^b(k,t) \leq 2t - b + 1 $ & \cite[Lemma 5.1]{Singh2025} & \\         
    \cline{2-5}   
                & Locally $(4, 2t)$-function, $b\geq 1$ & $ r_f^b(k, t) \leq 3t-b+1$ & \cite[Lemma 10]{Verma2026} &  \\  

    \cline{2-5} 
                & Locally $(2^b, 2t)$-function, $b\geq 1$ & $2(t - b + 1) \leq r_f^b(k, t) \leq 2t$ & \cite[Corollary 15]{Verma2026}  & \\  
    \Xhline{3\arrayrulewidth}     
     \multirow{3}{*} {$(\mathbb{Z}_{2^s}^k,d_h)$}                & Locally $(2, 2t)_h$-function & $ r_f^h(k,t) =t$ & \cite[Theorem 5.1]{liu2025function} &\\
     \cline{2-5}
                & Locally $(4, 2t)_h$-function,  &
       $ r_f^h (k, t) \leq 2t$
   & Proposition \ref{lambda4} &\\ 
     \cline{2-5}
                & Locally $(\lambda, 2t)_h$-function,  &
       $ r_f^h (k, t) \leq \begin{cases}
            \frac{\lambda t}{2} & \text{if } t \text{ is even}\\
            \frac{\lambda (t+1)-2}{2} & \text{if } t \text{ is odd}.
        \end{cases}$
   & Corollary \ref{optimal-upperbound} &\\ 
    \Xhline{3\arrayrulewidth} 
    \end{tabular}}
    \caption{Bounds on optimal redundancy of function-correcting codes in various metric spaces for locally bounded functions.}\label{tabredundancy}
    \end{table*}

 \renewcommand{\arraystretch}{3}\begin{table}[h!]
        \centering
  \footnotesize \begin{tabular}{|c|c|c|}
        \hline
         Metric    & Plotkin-like bound  & References\\
         \hline
          $(\mathbb{F}_2^k,d_H)$   & $N(D)\geq \begin{cases}
              \frac{4}{M^2}\sum_{i,j:i<j} [D]_{ij}, \text{ if } M \text{ is even, }\\
               \frac{4}{M^2-1}\sum_{i,j:i<j} [D]_{ij}, \text{ if } M \text{ is odd }
          \end{cases}$ & \cite[Lemma 1]{Lenz2023} \\
          \hline
             $(\mathbb{F}_q^k,d_b)$ & $N_b(D)\geq \begin{cases}
              \frac{2q^b}{(q^b-1)M^2}\sum_{i,j:i<j} [D]_{ij}, \text{ if } M\equiv 0 \pmod{q^b} \\
               \frac{2q^b}{(q^b-1)(M^2-1)}\sum_{i,j:i<j} [D]_{ij}, \text{ if } M\equiv 0 \pmod{q^b}\\
               \frac{2q^b}{M^2(q^b-1)-m(q^b-m)}\sum_{i,j:i<j} [D]_{ij}, \text{ if } M \equiv m \pmod{q^b}
          \end{cases}$  & \cite[Lemma 4.1]{Singh2025} \\
          \hline
           $(\mathbb{Z}_m^k,d_L)$   & $N_L(D)\geq \begin{cases}
              \frac{8}{M^2m}\sum_{i,j:i<j} [D]_{ij}, \text{ if } m \text{ is even, }\\
               \frac{8m}{M^2(m^2-1}\sum_{i,j:i<j} [D]_{ij}, \text{ if } m \text{ is odd }
          \end{cases}$ & \cite[Lemma 4.1]{Verma2025FCCL}  \\
             \hline
        $(\mathbb{Z}_{2^s}^k,d_h)$   &$N_h(D)\geq \frac{1}{M^2} \sum_{i,j} [D]_{ij}$  & \cite[Lemma 3.2]{liu2025function} \\
           \hline
           $(\mathbb{Z}_{4}^k,d_h)$   & $N_h(D)\geq \begin{cases}
                \frac{1}{M^2} \sum_{i,j}[D]_{ij}, & \text{ if } M\equiv 0,2 \pmod{4}\\
            \frac{1}{M^2-1}\sum_{i,j}[D]_{ij}, & \text{ if } M\equiv 1,3 \pmod{4}
           \end{cases}$  & \makecell{Theorem \ref{plotkinimp4}} \\
           \hline
        \end{tabular}
        \caption{Plotkin-like bounds on irregular distance codes with respect to various metrics}
        \label{tab2}
    \end{table}

  \renewcommand{\arraystretch}{1.5}  \begin{table}[h!]
        \centering
        \begin{tabular}{|c|c|c|}
        \hline
         Metric    &  Plotkin-like bound for linear function & References\\
         \hline
           $(\mathbb{F}_2^k,d_H)$   & $r_f^H(k,t)\geq \frac{q}{q-1}(2t+1)(1-q^{-l})-k+\frac{s}{(q-1)q^{k-1}}$ & \cite[Corollary 5]{Premlal2024} \\
           \hline
             $(\mathbb{F}_2^k,d_b)$ & $r_f^b(k,t)\geq \frac{q^b}{q^b-1}(2t-b+2)(1-q^{-l})-k+\left (\frac{q^b}{q^b-1}\right) \left (\frac{s}{q^k}\right)$  & \cite[Theorem 6]{Sampath2025} \\
           \hline
           $(\mathbb{Z}_m^k,d_L)$   & $r_f^L(k,t)\geq \frac{m}{S}(2t+1)(1-m^{-\ell})-k+\frac{s}{Sm^{k-1}}$ & \cite[Theorem 7.1]{Verma2025FCCL}  \\
             \hline
           $(\mathbb{Z}_{2^s}^k,d_h)$   & $r_f^h(k,t)\geq (2t+1)(1-2^{-s\ell})-k+\frac{A}{2^{sk}}$  & \makecell{ Theorem \ref{plotlinear}} \\
           \hline
        \end{tabular}
        \caption{Plotkin-like bounds on function-correcting codes for linear functions with respect to various metrics}
        \label{tab3}
    \end{table}
\noindent    
\textbf{ Our contributions}: In this work, we study function-correcting codes in the homogeneous metric over $\mathbb{Z}_{2^s}$ for several functions, including the weight function and its associated distribution function, the modular sum function, and broader classes for locally bounded functions and linear functions. The key technical contributions of this paper are as follows: \\
$\bullet$ We define locally bounded functions over $\mathbb{Z}_{2^s}^k$ with respect to a homogeneous metric and investigate locality properties of the homogeneous weight function and the homogeneous weight distribution function. (Section \ref{sectionlocally}) \\
$\bullet$ We provide a Poltkin-like bound for irregular homogeneous distance codes over $\mathbb{Z}_4$, that improves the existing bound. (Theorem \ref{plotkinimp4})\\
$\bullet$  We establish several upper and lower bounds on the optimal redundancy of function-correcting codes with homogeneous distance for locally bounded functions, and provide explicit constructions of function-correcting codes. (Section \ref{sectionredundancy})\\
$\bullet$  We investigate linear function-correcting codes and establish a Plotkin-like bound. Finally, we discuss the construction of function-correcting linear codes over $\mathbb{Z}_{2^s}$. (Section \ref{FCC-linear})\\\\

\noindent
\textbf{Organization}: Section \ref{section-pre} introduces the basic definitions and notation used in this paper and recalls fundamental results related to the homogeneous metric and function-correcting codes. In Section \ref{sectionlocally}, we introduce the notion of locally bounded functions under the homogeneous metric over $\mathbb{Z}_{2^s}$ and analyze the locality properties of the homogeneous weight function and its associated weight-distribution functions.
Section \ref{sectionredundancy} establishes lower and upper bounds on the optimal redundancy for locally bounded functions and presents explicit code constructions. In Section \ref{FCC-linear}, we investigate function-correcting codes for linear functions in the homogeneous metric over $\mathbb{Z}_{2^s}$, deriving a Plotkin-type bound for these functions that reduces to the classical Plotkin bound for error-correcting codes when the underlying linear function is bijective. We also discuss constructions of function-correcting linear codes over $\mathbb{Z}_{2^s}$. Finally,  Section \ref{sectionconclusion} summarizes the paper and outlines possible directions for future work.

\textbf{Notations:}
\begin{align*} 
    \text{FCC}: & \text{ function-correcting code}\\
    \text{FCCHD}: & \text{ function-correcting code with homogeneous distance}\\
    [M]: & \text{ set of positive integers } \{1,2,\dots, M\}\\
    \mathbb{N}_0: & \text{ set of all non-negative integers}\\
    \mathbb{F}_q:    & \text{ finite field with } q\text{ elements, where } q \text{ is a prime power}  \\
        \mathbb{Z}_m: &   \text{ ring of integers modulo } m\\
     d_H: &  \text{  Hamming metric }  \\
     d_b: & \ b  \text{-symbol metric  }  \\
     d_L: &  \text{ Lee metric}  \\
     d_h: &  \text{ homogeneous metric}  \\
     w_h : & \text{ homogeneous weight function}\\
     r_f^{H} (k,t): & \text{ the optimal redundancy of an } (f,t) \text{- FCC in the Hamming metric }\\
     r_f^{b} (k,t): & \text{ the optimal redundancy of an } (f,t) \text{- FCC in the $b$-symbol metric }\\
     r_f^{L} (k,t): & \text{ the optimal redundancy of an } (f,t) \text{- FCC in the Lee metric }\\
     r_f^{h} (k,t): & \text{ the optimal redundancy of an } (f,t) \text{- FCC in the homogeneous metric }\\
     % \lfloor x \rfloor : & \text{ floor function value of a real number } x\\
     (\lambda,\rho)_X\text{-function} : & \ \rho\text{-locally } \lambda \text{-bounded function with respect to metric $d_X$}, X= H, L, h.\\
     \text{Im}(f): & \text{ image set of function} f\\
     N_h(\lambda, 2t): & \text{ the minimum length of a code with } \lambda \text{ codewords} \\  & \text{ and  minimum homogeneous distance } 2t
\end{align*}

    % \subsection{Motivation}

    % \subsection{Related Work}

    % \subsection{Contributions}

    % \subsection{Organization}
    % \subsubsection{Notations}
    
\section{Preliminaries}\label{section-pre}

Let $\mathbb{Z}_{2^s}=\{0,1,\dots,2^s-1\}$ denote the ring of integers modulo $2^s$, is a finite commutative local ring whose ideals are totally ordered by inclusion, that is, a finite chain ring with unique maximal ideal generated by $2$. In this section, we introduce the fundamental definitions and preliminary results pertaining to function-correcting codes and the homogeneous metric over $\mathbb{Z}_{2^s}$. We begin by recalling the definition of homogeneous weight on an arbitrary finite ring.
 
\begin{definition}\cite[Homogeneous Weight]{Greferath2000} 
Let $R$ be a finite ring. A function $w_h$ on $R$ is called a (left) homogeneous weight if $w_h(0)=0$ and $w_h$ satisfies the following hold 
\begin{enumerate}
    \item For $a,b \in R$, $w_h(a)=w_h(b)$ whenever $Ra=Rb$.
    \item There exists a real number $\alpha>0$ such that 
    $$\sum_{b\in Ra}w_h(b)=\alpha |Ra|\ \ \  \text{ for all } a\in R\setminus\{0\},$$
\end{enumerate}
where $Ra=\{z\cdot a| z\in R\}$ is the left ideal generated by $x$.\\
The homogeneous distance between $a,b\in R$ is $d_h(a,b)=w_h(a-b)$.
\end{definition}
By \cite{Greferath2000}, the homogeneous weights are uniquely determined up to a scalar multiple of $\alpha$. Therefore, without loss of generality, we can assume $\alpha=1$. Using the formula of homogeneous weight given in \cite{Fan2013,Liu2020}, we have a formula to calculate the homogeneous weight over $\mathbb{Z}_{2^s}$ (see \cite{liu2025function}). Let $a\in \mathbb{Z}_{2^s}$. Then,
\begin{align}\label{hweight}
    w_h(a)=\begin{cases}
        0 & \text{if } a=0\\
        1 & \text{if } a\notin \langle 2^{s-1}\rangle\\
        2 & \text{if } a\in \langle 2^{s-1}\rangle\setminus\{0\}.
    \end{cases}
\end{align}

For $x=(x_1,x_2,\dots,x_k)$ and $y=(y_1,y_2,\dots,y_k)$ in $\mathbb{Z}_{2^s}^k$, $$d_h(x,y)=w_h(x-y)=\sum_{i=1}^kw_h(x_i-y_i).$$ In general, the homogeneous distance may not be a metric over $\mathbb{Z}_{2^s}$. From \cite[Theorem 2]{constantinescu1997}, the homogeneous distance over $\mathbb{Z}_m$ defines a metric if and only if $m\not \equiv 0 \pmod{6}$. Thus, the homogeneous distance is a metric over $\mathbb{Z}_{2^s}$ for all positive integers $s$. Moreover, for $s=1$, the homogeneous distance coincides with the Hamming distance over $\mathbb{Z}_2$ and for $s=2$, it coincides with the Lee metric over $\mathbb{Z}_4$. 

\begin{definition}
The minimum homogeneous distance of a code $C$ is defined as
\begin{equation*}
d_h(C) = \in\{d_h(x,y)| x,y\in \mathbb{Z}_{2^s}, x\neq y\}. 
\end{equation*}
 \end{definition}

Let $x\in\mathbb{Z}_{2^s}^k$. Define a ball $B_h(u,\rho)$  with center  $x$ and radius $\rho$ as follows
$$B_h(x,\rho)=\{y\in\mathbb{Z}_{2^s}^k| \ d_h(x,y)\leq \rho\}.$$

Function-correcting codes were introduced by Lenz et al. \cite{Lenz2023} in the context of the Hamming metric. Later, in \cite{liu2025function}, the authors studied function correcting codes for the homogeneous distance. Here, we briefly recall the basic definitions and results that are used throughout the paper. 

Let $f:\mathbb{Z}_{2^s}^k\to \text{Im}(f)$ be a function. Define an encoding 
\begin{align*}
    Enc:\mathbb{Z}_{2^s}^k\to \mathbb{Z}_{2^s}^{k+r}, \ Enc(x)=(x,x_p),
\end{align*}
where $x_p\in \mathbb{Z}_{2^s}^r$ is called redundancy vector. The set $C=\{Enc(x)| x\in \mathbb{Z}_{2^s}^k\}$ is called a code of length $k+r$ (equivalently, with redundancy $r$) over $\mathbb{Z}_{2^s}^k$. 

\begin{definition}\label{fcchd}[Function-correcting code with homogeneous distance]
Let $t$ be a positive integer and $f:\mathbb{Z}_{2^s}^k\to \text{Im}(f)$ be a function. An encoding function
\begin{align*}
\text{Enc}: \mathbb{F}_2^k \rightarrow \mathbb{F}_2^{k+r}, \quad \text{Enc}(x) = (x, x_p)\  \forall x\in \mathbb{F}_2^k
\end{align*}
is said to define an $(f,t)$-function-correcting code with homogeneous distance (in short FCCHD)

if for all $x, y \in \mathbb{Z}_{2^s}^k$ with $f(x) \neq f(y)$, the following holds
\begin{align*}
d_h(\text{Enc}(x), \text{Enc}(y)) \geq 2t + 1.
\end{align*}
\end{definition}
The value $r$ is referred to as the redundancy of function-correcting codes, and $k+r$ is called the code length. By definition, if at most $t$ transmission errors occur, the receiver can still reconstruct the correct value $f(u)$ using function $f$ and encoding $Enc$. Note that an error-correcting code  $C$ in the homogeneous metric with minimum homogeneous distance $2t+1$ can correct up to $t$ errors.  A code of length $k+r$ is said to be a systematic code if its first $k$ coordinates correspond to the message symbols and the remaining $r$ coordinates correspond to parity symbols.  Therefore, any systematic $t$-error-correcting code in the homogeneous metric yields, in a straightforward manner, an $(f,t)$-FCCHD.

\begin{definition}[Optimal redundancy]
Let $t>0$ be a integer and $f:\mathbb{Z}_{2^s}^k\to \text{Im}(f)$ be a function.
The optimal redundancy $r_f^h(k,t)$ of an $(f,t)$-FCCHD is the smallest integer $r>0$ for which there exists an $(f,t)$-FCCHD with redundancy $r$. 
\end{definition}

\begin{definition}[Homogeneous distance requirement matrix]
Let $f:\mathbb{Z}_{2^s}^k \to \text{Im}(f)$ be a function. Consider $M$ vectors $x_1, \ldots, x_{M} \in \mathbb{Z}_{2^s}^k$. Then, the homogeneous distance requirement matrix $D_f^{h}(t, x_1, \ldots, x_{M})$  defined as 
\begin{align*}
  [D_f^{h}(t, x_1, \ldots, x_{M})]_{ij} =
\begin{cases}
[2t +1 - d_h(x_i, x_j)]^+ & \text{if } f(x_i) \neq f(x_j), \\
0 & \text{otherwise},
\end{cases}  
\end{align*}
\end{definition}

\begin{definition}[$D_h$-Code]
Let $D$ be a $M\times M$ matrix over $\mathbb{N}_0$ with $[D]_{ii}=0$ for all $i\in [M]$. Let $P = \{ p_1,p_2, \ldots, p_M \}$ be a code of length $r$ over $\mathbb{Z}_{2^s}$. Then $P$ is called a $D$-homogeneous distance code (in short $D_h$-code) if there exists an ordering of codewords in $P$ such that  
$$d_h(p_i,p_j) \geq [D]_{ij} \quad \text{for all } i, j \in [M].$$
\end{definition}
Let $N_h(D)$ denote the smallest positive integer $r$ for which there exists a $D$- homogeneous distance code of length $r$. 

% The following lemma gives a  bound on the length of a irregular homogeneous distance code. 
% \begin{lemma}\cite[Lemma 3.2]{liu2025function}\label{plotkin}
%   Let $D\in \mathbb{N}_0^{M\times M}$ be a matrix. Then
%     \begin{align*}
%         N_h(D)\geq \frac{1}{M^2}\sum_{i,j}[D]_{ij}.
%     \end{align*}
% \end{lemma}

% \begin{definition}\cite{liu2025function}
% Let $f:\mathbb{Z}_{2^s}^k\to \text{Im}(f)$ be a function. For $e\in \text{Im}(f)$, let $f^{-1}(e)=\{x\in \mathbb{Z}_{2^s}^k| f(x)=e\}$. For $e_1,e_2$, the function homogeneous distance is defined as   
% \begin{align*}
%     d_h(e_1,e_2)=\min\{d_h(x,y)| x\in f^{-1}(e_1), y\in f^{-1}(e_2)\}.
% \end{align*}
% \end{definition}

% \begin{definition}\cite{liu2025function}
    
% \end{definition}

The following lemma gives a lower bound on the length of an irregular distance code with a homogeneous metric over $\mathbb{Z}_{2^s}$. 
\begin{lemma}\cite[Lemma 3.2]{liu2025function}\label{plotkin}
  Let $D\in \mathbb{N}_0^{M\times M}$ be a matrix. Then
    \begin{align*}
        N_h(D)\geq \frac{1}{M^2}\sum_{i,j\in [M]}[D]_{ij}.
    \end{align*}
\end{lemma}

\begin{corollary}\cite[Corollary 3.1]{liu2025function}\label{nbnh}
   Let $x_1,x_2,\dots,x_M$ be distinct vectors in $\mathbb{Z}_{2^s}^k$. Then the optimal redundancy of an $(f,t)$-FCCHD for any function $f:\mathbb{Z}_{2^s}^k\to \text{Im}(f)$ satisfies
   \begin{align*}
       r_f^h(k,t)\geq N_h(D_f^h(t,x_1,x_2,\dots,x_M)).
   \end{align*}
Moreover, if $s\geq 2$ and $|\text{Im}(f)|\geq 2$, then, $r_f^h(k,t)\geq t$. 
\end{corollary}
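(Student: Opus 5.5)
The statement to prove is Corollary~\ref{nbnh}. It has two parts: the general inequality $r_f^h(k,t)\geq N_h(D_f^h(t,x_1,\dots,x_M))$, and the specialization $r_f^h(k,t)\geq t$ when $s\geq 2$ and $|\text{Im}(f)|\geq 2$.

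\textbf{First part.} I would argue directly from the definitions.

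Take an optimal $(f,t)$-FCCHD with redundancy $r=r_f^h(k,t)$ and encoding $\text{Enc}(x)=(x,x_p)$. Consider the list of redundancy vectors $P=\{p_1,\dots,p_M\}$ with $p_i=(x_i)_p\in\mathbb{Z}_{2^s}^r$, ordered as the $x_i$.

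The homogeneous weight is additive over coordinates, so
\[
d_h(\text{Enc}(x_i),\text{Enc}(x_j))=d_h(x_i,x_j)+d_h(p_i,p_j).
\]
Now fix a pair $i,j$.
\begin{itemize}
\item If $f(x_i)\neq f(x_j)$, the FCC condition gives $d_h(x_i,x_j)+d_h(p_i,p_j)\geq 2t+1$. Hence $d_h(p_i,p_j)\geq 2t+1-d_h(x_i,x_j)$, and since distances are nonnegative, $d_h(p_i,p_j)\geq [2t+1-d_h(x_i,x_j)]^+$.
\item If $f(x_i)=f(x_j)$, the required entry is $0$, which holds trivially.
\end{itemize}
One subtlety is that the $p_i$ need not be distinct. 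The definition of a $D_h$-code lists codewords with an ordering, so I would read $P$ as a multiset. Whenever $p_i=p_j$, the requirement is met anyway, because then the corresponding entry of $D$ must be $0$.

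Thus $P$ is a $D_f^h$-code of length $r$, which gives $N_h(D)\leq r$.

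\textbf{Second part.} I would pick two vectors at small distance with different function values.

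Since $|\text{Im}(f)|\geq 2$, there exist $u,v$ with $f(u)\neq f(v)$. Walk from $u$ to $v$ by changing one coordinate at a time. Along this path some consecutive pair $x_1,x_2$ has $f(x_1)\neq f(x_2)$, and $x_1,x_2$ differ in exactly one coordinate.

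Because $s\geq 2$, I can do better: route each coordinate change through steps of value $\pm1$. Every element of $\mathbb{Z}_{2^s}$ is reachable from any other by repeated $+1$ steps, and $1\notin\langle 2^{s-1}\rangle$ when $s\geq 2$. So I may assume $x_1-x_2$ is a unit vector times $1$, which gives $d_h(x_1,x_2)=1$.

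Apply the first part with $M=2$ and $D=D_f^h(t,x_1,x_2)$, whose off-diagonal entries equal $2t$. Lemma~\ref{plotkin} then gives
\[
N_h(D)\geq \tfrac{1}{4}(2t+2t)=t,
\]
so $r_f^h(k,t)\geq t$.

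The main point needing care is the unit-step path argument that guarantees $d_h=1$ rather than $2$. This is exactly where $s\geq 2$ is used: for $s=1$ every nonzero element lies in $\langle 2^{s-1}\rangle$ and has weight $2$. Everything else is routine.
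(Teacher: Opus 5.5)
Your proof is correct and follows the standard argument behind the cited result (the paper gives no proof of its own, only the citation to Liu): the redundancy vectors of any $(f,t)$-FCCHD, listed in the order of $x_1,\dots,x_M$, form a $D_f^h(t,x_1,\dots,x_M)$-code; for the second claim, a pair $x_1,x_2$ with $f(x_1)\neq f(x_2)$ and $d_h(x_1,x_2)=1$ gives a $2\times 2$ requirement matrix with off-diagonal entries $2t$, which forces length at least $t$. Your unit-step path argument correctly supplies such a pair when $s\geq 2$, and reading a $D_h$-code as a multiset is the right way to handle possibly repeated redundancy vectors in the first part.
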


\section{Locally bounded functions in the homogeneous metric}\label{sectionlocally}
We define locally bounded functions in the homogeneous metric over $\mathbb{Z}_{2^s}^k$ and investigate the locality of the homogeneous weight and the homogeneous weight distribution functions. Let $x\in \mathbb{Z}_{2^s}^k$. Define a ball $B_{h}(x,\rho)$  with center  $x$ and radius $\rho$ as follows
$$B_{h}(x,\rho)=\{y\in \mathbb{Z}_{2^s}^k| \ d_h(x,y)\leq \rho\}.$$

\begin{definition}[Function homogeneous Ball]
Let  $f : \mathbb{Z}_{2^s}^k \to \operatorname{Im}(f)$ be a function. For a vector  $x \in \mathbb{Z}_{2^s}^k$ and a non-negative integer $\rho$, the function homogeneous ball of  $f$  with radius  $\rho$ centered at $x$ is defined as
\begin{align*}
   B_f^h(x, \rho) = \{ f(y) \mid y \in \mathbb{Z}_{2^s}^k \text{ and } d_h(x, y) \leq \rho \}. 
\end{align*}
\end{definition}
In \cite{liu2025function}, the authors defined locally binary homogeneous functions and investigated the optimal redundancy for these functions. Next, we define locally bounded homogeneous functions, which generalize the locally binary homogeneous functions.
\begin{definition}\label{deflocally}
 A function $f:\mathbb{Z}_{2^s}^k\to \text{Im}(f)$ is said to be a  $\rho$-locally $\lambda$-bounded homogeneous function (in short $(\lambda, \rho)_h$-function) if \begin{align*}
     |B_f^h(x,\rho)|\leq \lambda \text{ for all } x\in \mathbb{Z}_{2^s}^k
 \end{align*}
\end{definition}
For $\lambda=2$, the above definition reduces to locally binary homogeneous functions. In subsequent sections, we demonstrate that the value of $\lambda$ is pivotal in deriving the optimal redundancy bounds for FCCHD associated with a given function. Therefore, determining the smallest value of $\lambda$ for which a given function $f$ is a $(\lambda,\rho)_h$-function, for a given value of $\rho$, is significantly important. Observe that any function is $(\lambda,\rho)_h$-function when $\lambda = |\text{Im}(f)|$, irrespective of the choices of $\rho$. Similarly, when $\rho = 2k$, the function $f$ is  $(\lambda,\rho)_h$-function only if $\lambda \geq |\text{Im}(f)|$. Since these scenarios are relatively straightforward and offer limited theoretical insights, our attention is primarily directed toward the study of $(\lambda,\rho)_h$-functions under the more restrictive conditions $\lambda \leq |\text{Im}(f)|$ and $\rho \leq 2k$.  We denote $\lambda_0$ to be the smallest value of $\lambda$ for which a function $f$ is a $(\lambda,\rho)_h$-function. 
\begin{corollary}\label{balllocal}
   Let $f:\mathbb{Z}_{2^s}^k\to \text{Im}(f)$ be a $(\lambda_0,\rho)_h$-function. Then $\lambda_0\leq |B_h(\mathbf{0},\rho)|$. 
\end{corollary}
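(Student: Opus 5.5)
The plan is to bound, for an arbitrary centre $x\in\mathbb{Z}_{2^s}^k$, the size of the function ball $B_f^h(x,\rho)$ by the size of the ordinary ball $B_h(x,\rho)$. Then I would use translation invariance of the homogeneous distance to replace $x$ by $\mathbf{0}$. Since $\lambda_0$ is the smallest $\lambda$ with $|B_f^h(x,\rho)|\leq\lambda$ for all $x$, we have
\[
\lambda_0=\max_{x\in\mathbb{Z}_{2^s}^k}|B_f^h(x,\rho)|,
\]
so it suffices to show $|B_f^h(x,\rho)|\leq |B_h(\mathbf{0},\rho)|$ for every $x$.

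The first step is to note that $B_f^h(x,\rho)=f(B_h(x,\rho))$ is the image of the finite set $B_h(x,\rho)$ under $f$. The image of a finite set under any map has cardinality at most that of the set, so $|B_f^h(x,\rho)|\leq |B_h(x,\rho)|$.

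The second step uses $d_h(x,y)=w_h(x-y)=w_h(y-x)$. The last equality holds because $w_h(-a)=w_h(a)$, which follows from the case formula \eqref{hweight}: $a\in\langle 2^{s-1}\rangle$ if and only if $-a\in\langle 2^{s-1}\rangle$. Consequently the translation $y\mapsto y-x$ is a bijection from $B_h(x,\rho)$ onto $B_h(\mathbf{0},\rho)$, and the two balls have the same size.

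Combining the two steps gives $|B_f^h(x,\rho)|\leq |B_h(\mathbf{0},\rho)|$ for all $x$. Taking the maximum over $x$ yields $\lambda_0\leq |B_h(\mathbf{0},\rho)|$. None of these steps is a real obstacle; the only point needing care is justifying the translation invariance via the symmetry of $w_h$, which is immediate from \eqref{hweight}.
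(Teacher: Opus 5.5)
Your proof is correct and supplies the justification the paper leaves implicit. The paper states this corollary without proof, as an immediate consequence of $B_f^h(x,\rho)=f(B_h(x,\rho))$ and of all homogeneous balls of radius $\rho$ having the same size; the symmetry $w_h(-a)=w_h(a)$ is not actually needed, since $d_h(\mathbf{0},y-x)=w_h(x-y)=d_h(x,y)$ directly.
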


\subsection{Locality of the weight functions}
In this subsection, we explore the locally boundedness of the homogeneous weight and associated weight distribution function. In particular, we determine admissible values of $\lambda$ for which these functions are $(\lambda,\rho)_h$-functions.

\begin{definition}
    The homogeneous weight function is defined $w_h:\mathbb{Z}_{2^s}^k\to \{0,1,\dots, 2k\}$, $w_h(x)=$ homegeneous weight of $x\in\mathbb{Z}_{2^s}^k $. For $T>0$, the homogeneous weight distribution function is defined as
    $$\Delta_T^h(x)= \left \lfloor \frac{w_h(x)}{T} \right \rfloor, \forall x\in \mathbb{Z}_{2^s}^k .$$
\end{definition}

\begin{proposition}\label{proplocalhwt}
 Let $\rho$ be a positive integer. Then the homogeneous weight function is a locally $(2\rho+1,\rho)_h$-function.    
\end{proposition}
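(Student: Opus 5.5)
The plan is to show that for every $x\in\mathbb{Z}_{2^s}^k$ the set $B_f^h(x,\rho)$, with $f=w_h$, is contained in an integer interval of length $2\rho+1$, namely $\{w_h(x)-\rho,\dots,w_h(x)+\rho\}$. Then $|B_f^h(x,\rho)|\le 2\rho+1$ for all $x$, which is exactly the defining condition of a $(2\rho+1,\rho)_h$-function in Definition \ref{deflocally}.

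The key step is a triangle-type inequality for the homogeneous weight on $\mathbb{Z}_{2^s}^k$:
\begin{align*}
|w_h(y)-w_h(x)|\le d_h(x,y)=w_h(y-x)\quad\text{for all } x,y\in\mathbb{Z}_{2^s}^k .
\end{align*}
Since the paper already notes that $d_h$ is a metric on $\mathbb{Z}_{2^s}$ (by \cite[Theorem 2]{constantinescu1997}), it is also a metric on $\mathbb{Z}_{2^s}^k$, being a coordinatewise sum. The triangle inequality through the zero vector, $d_h(\mathbf 0,y)\le d_h(\mathbf 0,x)+d_h(x,y)$ and symmetrically, gives $w_h(y)\le w_h(x)+d_h(x,y)$ and $w_h(x)\le w_h(y)+d_h(x,y)$. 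This is the claimed inequality.

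If one prefers not to rely on the cited metric property, the coordinatewise inequality $w_h(a+b)\le w_h(a)+w_h(b)$ for $a,b\in\mathbb{Z}_{2^s}$ can be checked directly from (\ref{hweight}). The only case that could fail is $w_h(a+b)=2$ with $w_h(a)+w_h(b)\le 1$. That would force one of $a,b$ to be $0$, and then the inequality is trivial. The only mildly delicate point is this check, which is immediate because weights take values in $\{0,1,2\}$.

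With the inequality in hand, take any $y$ with $d_h(x,y)\le\rho$. Then $w_h(y)$ is an integer in $[w_h(x)-\rho,\,w_h(x)+\rho]$, so $B_f^h(x,\rho)$ has at most $2\rho+1$ elements. This concludes the proof.
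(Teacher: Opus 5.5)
Your proof is correct and matches the paper's argument: both apply the triangle inequality through $\mathbf{0}$ to get $w_h(x)-\rho\le w_h(y)\le w_h(x)+\rho$ for every $y$ with $d_h(x,y)\le\rho$, and then count the at most $2\rho+1$ integers in that interval. Your direct coordinatewise check of $w_h(a+b)\le w_h(a)+w_h(b)$ on $\mathbb{Z}_{2^s}$ makes the argument self-contained, since it no longer depends on the cited metric property.
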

\begin{proof}
Let $x\in \mathbb{Z}_{2^s}^k$ and $y\in B_h(x,\rho)$. Then, by the triangle inequality, we have
\begin{align*}
    \begin{split}
        w_h(y)=d_h(y,0)\leq d_h(y,x)+d_h(x,0)\leq \rho+w_h(x)
    \end{split}
\end{align*}
and 
\begin{align*}
    \begin{split}
        w_h(x)=d_h(x,0)\leq d_h(x,y)+d_h(y,0)\leq \rho+w_h(y).
    \end{split}
\end{align*}
Therefore, for any $y\in B_h(x,\rho)$, we have 
\begin{equation} \label{hwt}
    w_h(x)-\rho \leq w_h(y)\leq w_h(x)+\rho.
\end{equation} 
Hence, $|B_{w_h}^h(x,\rho)|\leq 2\rho+1$ for all $x\in \mathbb{Z}_{2^s}^k$.

\end{proof}

\begin{proposition}\label{lamhwd}
 The homogeneous weight distribution function $\Delta_T^h$ with threshold $T$ is a locally $(\left\lfloor\frac{2\rho}{T}\right\rfloor+2,\rho)_h$-function. 
\end{proposition}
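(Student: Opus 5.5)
The plan is to reduce the statement to the interval estimate \eqref{hwt} from the proof of Proposition \ref{proplocalhwt}, and then count how many integer multiples of $T$ can fall into an interval of length $2\rho$. Fix $x\in\mathbb{Z}_{2^s}^k$ and take any $y\in B_h(x,\rho)$. By \eqref{hwt}, the weight $w_h(y)$ lies in the interval $[w_h(x)-\rho,\,w_h(x)+\rho]$. Since $\lfloor\cdot/T\rfloor$ is nondecreasing, this gives
\begin{align*}
\left\lfloor \frac{w_h(x)-\rho}{T}\right\rfloor \le \Delta_T^h(y) \le \left\lfloor \frac{w_h(x)+\rho}{T}\right\rfloor .
\end{align*}
Consequently every value in $B_{\Delta_T^h}^h(x,\rho)$ is an integer in this range.

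It therefore suffices to prove the elementary inequality $\lfloor (a+2\rho)/T\rfloor-\lfloor a/T\rfloor\le \lfloor 2\rho/T\rfloor+1$ for every real $a$, applied with $a=w_h(x)-\rho$. The number of integers in the range above is this difference plus one, so the inequality gives exactly $\lfloor 2\rho/T\rfloor+2$ admissible values.

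To prove the inequality, write $a/T=\lfloor a/T\rfloor+\theta_1$ and $2\rho/T=\lfloor 2\rho/T\rfloor+\theta_2$ with $\theta_1,\theta_2\in[0,1)$. Then
\begin{align*}
\left\lfloor \frac{a+2\rho}{T}\right\rfloor=\left\lfloor \frac{a}{T}\right\rfloor+\left\lfloor\frac{2\rho}{T}\right\rfloor+\lfloor\theta_1+\theta_2\rfloor ,
\end{align*}
and $\lfloor\theta_1+\theta_2\rfloor\le 1$, which is the required bound. Hence $|B_{\Delta_T^h}^h(x,\rho)|\le\lfloor 2\rho/T\rfloor+2$ for all $x$, and $\Delta_T^h$ is a $(\lfloor 2\rho/T\rfloor+2,\rho)_h$-function.

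I expect no real obstacle. The only point needing care is the floor bookkeeping, namely keeping the $+2$ rather than $+1$ or $+3$. If negativity of $w_h(x)-\rho$ is a concern, it causes no problem, since the floor identity above holds for all real $a$; alternatively one may intersect the interval with $[0,2k]$, which only shrinks the count.
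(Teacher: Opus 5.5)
Your proof is correct and follows the paper's argument: both use the interval bound \eqref{hwt} from Proposition~\ref{proplocalhwt}, monotonicity of the floor, and the estimate $\lfloor (w_h(x)+\rho)/T\rfloor-\lfloor (w_h(x)-\rho)/T\rfloor\le\lfloor 2\rho/T\rfloor+1$. Your version proves that floor estimate explicitly via fractional parts. It also correctly treats the number of integers in the range as an upper bound on $|B_{\Delta_T^h}^h(x,\rho)|$, where the paper writes it as an equality.
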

\begin{proof}
  Let $x\in \mathbb{Z}_{2^s}^k$.  From the proof of Proposition \ref{proplocalhwt}, for $y\in B_h(x,\rho)$, we have 
    \begin{equation*} 
    \left \lfloor \frac{w_h(x)-\rho}{T}\right \rfloor \leq \Delta_T^h(y)\leq \left \lfloor \frac{w_h(x)+\rho}{T}\right \rfloor.
\end{equation*} 

Therefore, $$\mid B_{\Delta_T^h}^h(x,\rho)\mid= \left \lfloor \frac{w_h(x)+\rho}{T}\right \rfloor-\left \lfloor \frac{w_h(x)-\rho}{T}\right \rfloor+1\leq \left \lfloor \frac{(w_h(x)+\rho)- (w_h(x)-\rho)}{T}\right \rfloor+2.$$
 As $x$ is arbitrary, therefore, $\Delta_T^h$ is a locally $\left (\left \lfloor\frac{2\rho}{T}\right \rfloor + 2,\rho\right )_h$-function.
     This completes the proof.
\end{proof}

\begin{corollary}
     Let $k,\  \rho$ be positive integers and $\rho$ is even with $\rho<2k $. Then the Hamming weight distribution function $\Delta_T^h$ with threshold $T=2\rho+1$, is a locally $\left(\lambda_0=\left\lfloor\frac{2\rho}{T}\right\rfloor+2,\rho\right)_h$-function. 
\end{corollary}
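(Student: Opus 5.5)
The plan is to prove the two inequalities $\lambda_0\le 2$ and $\lambda_0\ge 2$ separately, since for $T=2\rho+1$ we have $\left\lfloor\frac{2\rho}{T}\right\rfloor=0$, so the claimed value is $\lambda_0=2$. The upper bound is immediate. By Proposition \ref{lamhwd}, $\Delta_T^h$ is a $\left(\left\lfloor\frac{2\rho}{T}\right\rfloor+2,\rho\right)_h$-function, i.e. a $(2,\rho)_h$-function. For intuition, inequality \eqref{hwt} says every $y\in B_h(x,\rho)$ has $w_h(y)$ in the interval $[w_h(x)-\rho,\,w_h(x)+\rho]$, which contains $2\rho+1=T$ consecutive integers. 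Such a window meets at most two consecutive blocks $\{jT,\dots,jT+T-1\}$, so $\Delta_T^h$ takes at most two values on the ball.

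For the lower bound I would show that $\lambda_0=1$ fails by exhibiting one $x$ whose ball contains two values of $\Delta_T^h$. That is, I need $x,y$ with $d_h(x,y)\le\rho$, $w_h(x)\le 2\rho$ and $w_h(y)\ge 2\rho+1$. I will use \eqref{hweight}: entries in $\langle 2^{s-1}\rangle\setminus\{0\}$ have weight $2$ and units have weight $1$. Since $\rho$ is even, take $x$ with $\rho/2$ coordinates equal to $2^{s-1}$, one coordinate equal to $1$, and the rest $0$, so $w_h(x)=\rho+1\le 2\rho$ and $\Delta_T^h(x)=0$. Then take $y$ from $x$ by changing $\rho/2$ further zero coordinates to $2^{s-1}$. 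This gives $d_h(x,y)=2\cdot\frac{\rho}{2}=\rho$ and $w_h(y)=2\rho+1=T$, so $\Delta_T^h(y)=1$. Hence $|B^h_{\Delta_T^h}(x,\rho)|\ge 2$, and combined with the upper bound, $\lambda_0=2$.

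The main obstacle is room in the coordinates: the construction uses $\rho+1$ nonzero coordinates of $y$. It therefore needs $\rho+1\le k$, equivalently that weight $2\rho+1\le 2k$ is attainable in $\mathbb{Z}_{2^s}^k$. If instead $k\le\rho<2k$, then every vector has weight at most $2k<T$. In that case $\Delta_T^h\equiv 0$ and $\lambda_0=1$, so the hypothesis $\rho<2k$ alone does not suffice. I would first check whether the intended hypothesis is $2\rho+1\le 2k$ (i.e. $\rho<k$) and carry out the argument under that condition, remarking on the degenerate range. The parity of $\rho$ is used only so that the weight increase $\rho$ can be realized exactly with weight-$2$ symbols. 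For odd $\rho$ one could add a single unit entry instead, so evenness is convenient rather than essential.
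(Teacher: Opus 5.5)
Your proposal is correct and follows the paper's argument: the upper bound comes from Proposition~\ref{lamhwd}, and the lower bound uses the same witness $x$, with $\rho/2$ entries equal to $2^{s-1}$, one entry $1$, and zeros elsewhere. Your explicit choice of $y$ also exposes something the paper's proof leaves implicit: it assumes the ball around $x$ reaches weight $w_h(x)+\rho=2\rho+1$, which requires $k\ge\rho+1$, and you are right that for $k\le\rho<2k$ (e.g.\ $k=\rho=2$, $T=5>2k$) one has $\Delta_T^h\equiv 0$ and hence $\lambda_0=1$, so the hypothesis should be $\rho<k$.
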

\begin{proof}
Let $x=(\underbrace{2^{s-1},2^{s-1},\dots,2^{s-1}}_{\frac{\rho}{2}-times}, 1,0,0,\dots,0)\in \mathbb{Z}_{2^s}^k$. Then $w_h(x)=\rho+1$.
Similar to the proof of Proposition \ref{lamhwd}, we have

$$\mid B_{\Delta_T^h}^h(x,\rho)\mid= \left \lfloor \frac{w_h(x)+\rho}{T}\right \rfloor-\left \lfloor \frac{w_h(y)-\rho}{T}\right \rfloor+1= \left \lfloor \frac{(w_h(x)+\rho)- (w_h(x)-\rho)}{T}\right \rfloor+2=\left \lfloor \frac{2\rho}{T}\right\rfloor+2.$$
This completes the proof.
\end{proof}

    \begin{corollary}\label{rho+1}
        Let $k, \rho, T \in \mathbb{N}$ with $\rho \leq 2k$. If $T$ divides $\rho$, then  homogeneous weight distribution function $\Delta_T^h$ with threshold $T$, is a locally $\left(\lambda=\left\lfloor\frac{2\rho}{T}\right\rfloor+1,\rho\right)_h$-function. 
    \end{corollary}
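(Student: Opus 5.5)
From the proof of Proposition \ref{proplocalhwt}, every $y\in B_h(x,\rho)$ satisfies $w_h(x)-\rho\le w_h(y)\le w_h(x)+\rho$. Hence the values of $\Delta_T^h$ on the ball lie in the integer interval
\[
\left\lfloor \frac{w_h(x)-\rho}{T}\right\rfloor,\;\dots,\;\left\lfloor \frac{w_h(x)+\rho}{T}\right\rfloor .
\]
So it suffices to show that, when $T\mid\rho$, this interval contains at most $\frac{2\rho}{T}+1$ integers. Note that $\left\lfloor \frac{2\rho}{T}\right\rfloor=\frac{2\rho}{T}$ in this case.

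The key step is an exact computation. Write $\rho=mT$ with $m\in\mathbb{N}$ and set $w=w_h(x)$. Since $mT$ is a multiple of $T$, we have $\lfloor (w\pm mT)/T\rfloor=\lfloor w/T\rfloor\pm m$, because shifting the argument of a floor by an integer shifts the floor by that integer. The difference of the two endpoints is therefore exactly $2m$, and the interval contains exactly $2m+1=\frac{2\rho}{T}+1$ integers. This uses the general inequality $\lfloor a\rfloor-\lfloor b\rfloor\le\lfloor a-b\rfloor+1$ from Proposition \ref{lamhwd} with the extra $+1$ removed: that $+1$ arises only from a fractional-part carry, which cannot occur when $a-b$ is a multiple of $T$ and $a,b$ share the same residue.

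Two boundary points need a check. If $w-\rho<0$, then $\Delta_T^h(y)\ge 0$ anyway, so the lower endpoint can be replaced by $\max\{0,\lfloor (w-\rho)/T\rfloor\}$, which only shrinks the set. Likewise, the range of $w_h$ caps the upper end at $\lfloor 2k/T\rfloor$. Both truncations keep the count at most $\frac{2\rho}{T}+1$. The hypothesis $\rho\le 2k$ is not needed for the upper bound; it only ensures the statement is in the nontrivial regime described after Definition \ref{deflocally}.

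There is no real obstacle. The only delicate point is to phrase the floor identity $\lfloor z+n\rfloor=\lfloor z\rfloor+n$ for integer $n$ correctly, instead of relying on the cruder subadditivity bound used in Proposition \ref{lamhwd}. Since $x$ is arbitrary, it follows that $|B_{\Delta_T^h}^h(x,\rho)|\le \left\lfloor\frac{2\rho}{T}\right\rfloor+1$ for all $x$, so $\Delta_T^h$ is a $\left(\left\lfloor\frac{2\rho}{T}\right\rfloor+1,\rho\right)_h$-function.
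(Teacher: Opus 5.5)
Your proof is correct and follows essentially the same route as the paper: Proposition~\ref{proplocalhwt} confines $w_h(y)$ to $[w-\rho,\,w+\rho]$, and $T\mid\rho$ gives $\lfloor (w\pm\rho)/T\rfloor=\lfloor w/T\rfloor\pm \rho/T$, so at most $2\rho/T+1$ values occur (the paper reaches the same floor step through a fractional-part comparison). Your version is slightly more careful than the paper's. The paper writes $|B_{\Delta_T^h}^h(x,\rho)|$ as an equality, but only ``$\le$'' is justified, since the range may be truncated at $0$ or at $\lfloor 2k/T\rfloor$, and you handle this explicitly.
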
   
    \begin{proof}
Let $x\in \mathbb{Z}_{2^s}^k$ with $w_h(x)=w$. If $T\mid \rho$, then fractional parts $\left\{ \frac{w + \rho}{T}\right\} \geq \left\{ \frac{w - \rho}{T}\right\}$.  Consequently,
\begin{align*}
   |B_{\Delta_T^h}(x,\rho)|=\left\lfloor \frac{w + \rho}{T} \right\rfloor - \left\lfloor \frac{w - \rho}{T} \right \rfloor + 1  =  \left\lfloor \frac{w + \rho}{T} -  \frac{w - \rho}{T} \right \rfloor + 1 =  \left\lfloor \frac{2\rho}{T} \right \rfloor + 1.
\end{align*}
    \end{proof}

\begin{proposition}\label{ballbdd}
   Let  $\Delta_T^h$ be a $(\lambda_0,\rho)_h$-function for some $0\leq \rho \leq 2k$ and positive integer $T$. Then  $\lambda_0\geq\left\lfloor\frac{\rho} {T}\right\rfloor + 1$.
\end{proposition}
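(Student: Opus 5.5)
Since $\lambda_0$ is the smallest $\lambda$ with $|B_{\Delta_T^h}^h(x,\rho)|\le\lambda$ for all $x$, it suffices to exhibit a single center $x$ whose function ball contains at least $\lfloor\rho/T\rfloor+1$ distinct values of $\Delta_T^h$. The natural choice is $x=\mathbf{0}$. Then $B_h(\mathbf{0},\rho)$ consists of all vectors of homogeneous weight at most $\rho$, and $B_{\Delta_T^h}^h(\mathbf{0},\rho)=\{\lfloor w/T\rfloor : w\in W_\rho\}$, where $W_\rho$ is the set of homogeneous weights actually attained by vectors in $\mathbb{Z}_{2^s}^k$ that are at most $\rho$.

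The key step is to show that every integer $w$ with $0\le w\le \rho$ is attained as $w_h(y)$ for some $y\in\mathbb{Z}_{2^s}^k$, using $\rho\le 2k$. For $s\ge 2$ the element $1\notin\langle 2^{s-1}\rangle$ has weight $1$ and $2^{s-1}$ has weight $2$ by \eqref{hweight}. Write $w=2a+b$ with $b\in\{0,1\}$. Since $w\le 2k$, we have $a+b\le k$, except in the case $w$ odd with $w=2k+1$, which cannot occur. We then take $y$ with $a$ coordinates equal to $2^{s-1}$, $b$ coordinates equal to $1$, and the rest $0$, which gives $w_h(y)=w$ and $d_h(\mathbf{0},y)=w\le\rho$. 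For $s=1$ the weight is Hamming weight, so only $0,\dots,k$ are attained. This edge case is where I expect the main obstacle: if $\rho>k$ and $s=1$, then the claim must be understood with $s\ge2$, or we need $\rho\le k$. I would state explicitly that $s\ge 2$ is assumed, consistent with the homogeneous (non-Hamming) setting, or alternatively handle $s=1$ by noting $\rho\le k$ there.

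Given that all weights $0,1,\dots,\rho$ occur in the ball, the values $\lfloor w/T\rfloor$ for $w=0,T,2T,\dots,\lfloor\rho/T\rfloor T$ are the distinct integers $0,1,\dots,\lfloor\rho/T\rfloor$. Hence $|B_{\Delta_T^h}^h(\mathbf{0},\rho)|\ge\lfloor\rho/T\rfloor+1$ (in fact with equality). Since $\Delta_T^h$ is a $(\lambda_0,\rho)_h$-function, $\lambda_0\ge |B_{\Delta_T^h}^h(\mathbf{0},\rho)|\ge\lfloor\rho/T\rfloor+1$, as claimed.
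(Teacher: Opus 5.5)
Your proposal is correct and follows the paper's route. The paper also centers the ball at $\mathbf{0}$ and reads off the $\lfloor\rho/T\rfloor+1$ distinct values $\lfloor i/T\rfloor$ on the weight spheres $S_h(\mathbf{0},i)$, $0\le i\le\rho$. Your construction showing that every weight $0,\dots,\rho$ is attained when $s\ge 2$ and $\rho\le 2k$ proves these spheres are nonempty, a fact the paper uses without comment.

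Your $s=1$ caveat is genuine: for example, $k=2$, $\rho=3$, $T=1$ gives $\lambda_0=3<4$ under either reading of the weight. However, by \eqref{hweight} one has $w_h(1)=2$ in $\mathbb{Z}_2$, so for $s=1$ the missing weights are the odd ones. Your fallback assumption $\rho\le k$ therefore does not rescue the claim (take $k=\rho=3$, $T=1$). Assuming $s\ge 2$ is the right repair.
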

\begin{proof}
    Note that $$|B_h(\mathbf{0},\rho)|=\sum_{i=0}^\rho|S_h(\mathbf{0},i)|,$$
    where $|S_h(\mathbf{0},i)|=|\{x\in \mathbb{Z}_{2^s}^k|\ d_h(x,\mathbf{0})=w_h(x)=i\}|$ for $0\leq i\leq \rho$. Also, 
    $$\Delta_T^h(S_h(\mathbf{0},i))= \left \{\left\lfloor\frac{i}{T}\right\rfloor\right\}.$$ Thus $\max_{x\in \mathbb{Z}_{2^s}^k} |B^b_{\Delta_T^h}(x,\rho)|\geq \left\lfloor \frac{\rho}{T}\right\rfloor + 1 $. This completes the proof.
\end{proof}
Combining the preceding results, we obtain the following corollaries.
\begin{corollary}\label{luboundwd}
    Let the homogeneous weight distribution function $\Delta^h_T$  be a locally $(\lambda_0,\rho)_h$-function. Then   $\left\lfloor\frac{\rho}{T}\right\rfloor + 1\leq \lambda_0\leq \left\lfloor\frac{2\rho}{T}\right\rfloor+2$.
\end{corollary}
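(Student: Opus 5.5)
The corollary follows by combining the two bounds already established for $\Delta_T^h$. The upper bound comes straight from Proposition \ref{lamhwd}, and the lower bound from Proposition \ref{ballbdd}, with $\lambda_0$ the smallest $\lambda$ for which $\Delta_T^h$ is a $(\lambda,\rho)_h$-function.

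\emph{Upper bound.} Proposition \ref{lamhwd} shows that $|B^h_{\Delta_T^h}(x,\rho)|\le \lfloor 2\rho/T\rfloor+2$ for every $x\in\mathbb{Z}_{2^s}^k$. Hence $\Delta_T^h$ is a $(\lfloor 2\rho/T\rfloor+2,\rho)_h$-function, and minimality of $\lambda_0$ gives $\lambda_0\le \lfloor 2\rho/T\rfloor+2$.

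\emph{Lower bound.} By definition, $\lambda_0\ge |B^h_{\Delta_T^h}(x,\rho)|$ for every $x$, so it is enough to exhibit one centre whose function ball is large. Take $x=\mathbf{0}$. The task is to show that every weight $i\in\{0,1,\dots,\rho\}$ is actually attained by some $y$ with $w_h(y)=i$; then $y\in B_h(\mathbf{0},\rho)$. The values $\lfloor i/T\rfloor$ for $0\le i\le\rho$ cover all integers $0,1,\dots,\lfloor\rho/T\rfloor$, which gives $|B^h_{\Delta_T^h}(\mathbf{0},\rho)|\ge\lfloor\rho/T\rfloor+1$.

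The main obstacle is realizing each weight $i\le\rho\le 2k$, which Proposition \ref{ballbdd} used only implicitly. I would construct the vectors explicitly.
\begin{itemize}
\item If $s\ge 2$ and $i\le k$, take $i$ coordinates equal to $1$, each of weight $1$.
\item If $k<i\le 2k$, take $i-k$ coordinates equal to $2^{s-1}$ (weight $2$) and the remaining $k-(i-k)$ coordinates equal to $1$. The total weight is $2(i-k)+(2k-i)=i$.
\item If $s=1$, the weight is the Hamming weight, so only weights $\le k$ occur and the construction covers only $i\le\min(\rho,k)$.
\end{itemize}
So the lower bound holds as stated for $s\ge2$. When $s=1$ and $\rho>k$, I would replace $\rho$ by $\min(\rho,k)$, or else note that the range $\rho\le k$ is the meaningful one. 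Combining the two bounds gives the stated sandwich.
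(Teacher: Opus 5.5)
Your proof is correct and follows the paper's route: the upper bound is Proposition~\ref{lamhwd} plus minimality of $\lambda_0$, and the lower bound is the count over $B_h(\mathbf{0},\rho)$ from Proposition~\ref{ballbdd}. Your explicit vectors (for $s\ge 2$, $\rho\le 2k$) supply the fact that every weight $0,\dots,\rho$ is realized, which the paper uses without comment. One correction to your aside: under the paper's formula \eqref{hweight} with $s=1$ one has $\langle 2^{s-1}\rangle=\mathbb{Z}_2$, so $w_h$ is twice the Hamming weight and only even weights occur. The bound therefore already fails for $\rho\le k$ (e.g.\ $T=\rho=1$ gives $\lambda_0=1$, since every homogeneous ball of radius $1$ is a singleton), so the right repair is to assume $s\ge 2$ rather than to replace $\rho$ by $\min(\rho,k)$.
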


\begin{corollary}\label{luboundw}
    Let the homogeneous weight function be a locally $(\lambda_0,\rho)_h$-function. Then $\rho+1\leq \lambda_0\leq 2\rho+2$. 
\end{corollary}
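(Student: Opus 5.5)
The statement to prove is Corollary \ref{luboundw}: if the homogeneous weight function $w_h$ is a locally $(\lambda_0,\rho)_h$-function, then $\rho+1\leq \lambda_0\leq 2\rho+2$. I would treat the two inequalities separately, in the same way Corollary \ref{luboundwd} combines Propositions \ref{lamhwd} and \ref{ballbdd}.

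\emph{Upper bound.} Proposition \ref{proplocalhwt} shows that $w_h$ is a $(2\rho+1,\rho)_h$-function. Since $\lambda_0$ is the smallest admissible $\lambda$, this gives $\lambda_0\leq 2\rho+1\leq 2\rho+2$. As a cross-check, one can instead view $w_h$ as $\Delta_1^h$ (threshold $T=1$), since $\lfloor w_h(x)/1\rfloor=w_h(x)$. Then Corollary \ref{luboundwd} gives $\lambda_0\leq 2\rho+2$ directly, which is exactly the stated upper bound. So the stated constant $2\rho+2$ is the $T=1$ specialization, and the argument is immediate either way.

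\emph{Lower bound.} Specializing Proposition \ref{ballbdd} to $T=1$ gives $\lambda_0\geq \rho+1$. I would also make the argument explicit rather than rely only on that specialization, because the hypothesis $\rho\le 2k$ matters. Take the center $x=\mathbf{0}$. For each $i\in\{0,1,\dots,\rho\}$, I need a vector $y$ with $w_h(y)=i$:
\begin{itemize}
\item For $i\le k$, take $y$ with $i$ coordinates equal to $1$ and the rest $0$. Each $1$ contributes weight $1$, using $s\ge 2$ so that $1\notin\langle 2^{s-1}\rangle$.
\item For $k<i\le 2k$, take $i-k$ coordinates equal to $2^{s-1}$ and the remaining $2k-i$ coordinates equal to $1$, giving weight $2(i-k)+(2k-i)=i$.
\end{itemize}
Each such $y$ lies in $B_h(\mathbf{0},\rho)$, so $B^h_{w_h}(\mathbf{0},\rho)\supseteq\{0,1,\dots,\rho\}$ and hence $\lambda_0\geq\rho+1$.

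The only step needing care is this realizability of every weight $0,\dots,\rho$. It requires $\rho\le 2k$, which is the standing assumption from Section \ref{sectionlocally}, and $s\ge 2$. In the degenerate case $s=1$, $w_h$ is the Hamming weight: odd weights are still realized, but the maximum weight is $k$. I would then note the requirement $\rho\le k$ for that case, which makes the same construction work using only $1$'s.
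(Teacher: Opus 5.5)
Your argument is correct and is essentially the paper's. The paper obtains this corollary just by setting $T=1$ in Corollary~\ref{luboundwd} (since $\Delta_1^h=w_h$). Your explicit vectors of each weight $0,\dots,\rho$ usefully supply the nonemptiness of the shells $S_h(\mathbf{0},i)$, which the proof of Proposition~\ref{ballbdd} takes for granted. Your use of Proposition~\ref{proplocalhwt} also gives the sharper bound $\lambda_0\le 2\rho+1$.

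Your $s=1$ aside is wrong, however. By \eqref{hweight}, $1\in\langle 2^{0}\rangle=\mathbb{Z}_2$ gives $w_h(1)=2$, so the weight is twice the Hamming weight, not equal to it. Hence only even weights occur, and the lower bound actually fails for odd $\rho$: for example, $\rho=1$ makes every ball a singleton, so $\lambda_0=1$. The corollary should therefore be read with $s\ge 2$ and $\rho\le 2k$, rather than extended to $s=1$ with $\rho\le k$.
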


\section{The Optimal Redundancy of FCCHD for locally bounded functions}\label{sectionredundancy}

In this section, we focus on the construction of function-correcting codes and their optimal redundancy in the homogeneous metric over $\mathbb{Z}_{2^s}$.   The Plotkin-like bound gives an important bound on the optimal redundancy of function-correcting codes. In the next theorem, we establish a Plotkin-like bound over $\mathbb{Z}_4$ which improves Lemma \ref{plotkin}.
\begin{theorem}\label{plotkinimp4}
  Let $s=2$, i.e. $Z_{2^s}=\mathbb{Z}_4$, and  $D\in \mathbb{N}_0^{M\times M}$ be a matrix. Then
    \begin{align*}
        N_h(D)\geq \begin{cases}
            \frac{1}{M^2} \sum_{i,j}[D]_{ij}, & \text{ if } M\equiv 0,2 \pmod{4}\\
            \frac{1}{M^2-1}\sum_{i,j}[D]_{ij}, & \text{ if } M\equiv 1,3 \pmod{4}.
        \end{cases}
    \end{align*}
\end{theorem}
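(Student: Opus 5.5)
The plan is the standard Plotkin double-counting argument, done coordinate by coordinate, with a sharper bound on the per-coordinate sum over $\mathbb{Z}_4$. Let $P=\{p_1,\dots,p_M\}$ be a $D_h$-code of length $r=N_h(D)$. Summing the defining inequalities over all ordered pairs gives
\begin{align*}
\sum_{i,j}[D]_{ij}\le \sum_{i,j} d_h(p_i,p_j)=\sum_{\ell=1}^{r} S_\ell, \qquad S_\ell=\sum_{i,j} w_h(p_{i\ell}-p_{j\ell}).
\end{align*}
So it suffices to show, for every coordinate $\ell$, that $S_\ell\le M^2$ when $M\equiv 0,2\pmod 4$ and $S_\ell\le M^2-1$ when $M$ is odd. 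Dividing then yields the stated bound on $r$.

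Fix a coordinate and let $n_a$ ($a\in\mathbb{Z}_4$) be the number of codewords whose entry in that coordinate is $a$, so $n_0+n_1+n_2+n_3=M$. By \eqref{hweight} with $s=2$, the weight is $1$ for the differences $\pm1$ and $2$ for the difference $2$. Hence
\begin{align*}
S=2\bigl[(n_0+n_2)(n_1+n_3)\bigr]+4\bigl[n_0n_2+n_1n_3\bigr].
\end{align*}
Write $A=n_0+n_2$ and $B=n_1+n_3$. Then $n_0n_2\le A^2/4$ and $n_1n_3\le B^2/4$, so
\begin{align*}
S\le 2AB+A^2+B^2=(A+B)^2=M^2.
\end{align*}
This already recovers Lemma \ref{plotkin} in this case, and the bound is attained when $M$ is even with $A,B$ even. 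Note that the bound $(A+B)^2$ holds for any split of $M$ into $A+B$, so the parity of $A$ and $B$ only matters for the odd case.

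For the odd case, use integrality: $n_0n_2\le \lfloor A^2/4\rfloor$, and similarly for $B$. Since $A+B=M$ is odd, exactly one of $A,B$ is odd, and for that one $\lfloor A^2/4\rfloor=(A^2-1)/4$. Therefore
\begin{align*}
S\le 2AB+A^2+B^2-1=M^2-1.
\end{align*}

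The only delicate step is this integrality refinement. The real bound $M^2$ is not attainable for odd $M$ precisely because an odd class cannot be split evenly between $0$ and $2$ (or between $1$ and $3$). Care is needed to apply the floor only to the odd part, and to use that $A+B=M$ holds exactly. Summing over the $r$ coordinates then gives $\sum_{i,j}[D]_{ij}\le r(M^2-1)$, which completes the argument.
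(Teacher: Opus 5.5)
Your argument is correct. It shares the paper's outer framework: sum the $D_h$-code inequalities over all ordered pairs, split the total distance by coordinates, and bound each column sum. The per-column bound, however, is obtained differently.

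The paper writes the column sum as $2(n_0n_1+2n_0n_2+n_0n_3+n_1n_2+2n_1n_3+n_2n_3)$. It asserts without proof that this is maximized by the ``most uniform'' distribution, then evaluates one chosen arrangement in each residue class of $M$ modulo $4$. Moreover, for $M\equiv 2\pmod 4$ its ``WLOG'' choice $n_0=n_2=Q+1$ is not actually without loss of generality: the equally uniform choice $n_0=n_1=Q+1$, $n_2=n_3=Q$ gives only $M^2-2$.

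You instead group by the cosets $\{0,2\}$ and $\{1,3\}$ of $2\mathbb{Z}_4$, which amounts to the identity $S=2AB+4(n_0n_2+n_1n_3)=M^2-(n_0-n_2)^2-(n_1-n_3)^2$. This buys several things:
\begin{itemize}
\item It bounds every distribution at once, so the unproved extremality claim is not needed.
\item It shows that only the parity of $M$ matters, so the paper's four cases collapse to two.
\item It identifies all maximizers: equality holds iff $(n_0-n_2)^2+(n_1-n_3)^2$ equals $0$ for even $M$, resp.\ $1$ for odd $M$. For example, $n_0=n_2=M/2$, $n_1=n_3=0$ is also optimal.
\end{itemize}

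The same pairing works verbatim over $\mathbb{Z}_{2^s}$. Pairing each $a$ with $a+2^{s-1}$ gives $S=M^2-\sum_c (n_c-n_{c+2^{s-1}})^2$, summed over coset representatives $c$. So your proof actually yields the $M^2-1$ improvement for odd $M$ for every $s$, not only $s=2$.
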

\begin{proof}
    Let $C$ be a $D_h$-code of length $r$ and size $M$ over $\mathbb{Z}_4$. Observe that 
    \begin{align}\label{eqplot}
        \sum_{i,j} [D]_{ij}\leq \sum_{x,y\in C}d_h(x,y)=\sum_{x,y\in C}\sum_{j=1}^{r}d_h(x_j,y_j)=\sum_{j=1}^{r}\sum_{x,y\in C}d_h(x_j,y_j).
    \end{align}
  Write the codewords of $C$ as rows of a matrix $B$ of order $M\times r$.  Let $B_j$ denotes the $j$-th column of $B$. Let $n_a$ be number of $a\in \mathbb{Z}_{4}$ in column $B_j$. Then
  \begin{align*}
      \sum_{x,y\in C}d_h(x_j,y_j)=\sum_{a,b\in \mathbb{Z}_{4}}n_an_bd_h(a,b)=2(n_0n_1+2n_0n_2+n_0n_3+n_1n_2+2n_1n_3+n_2n_3).
\end{align*}

  The contribution of the $j$-th coordinate to the total pairwise distance is maximized when the elements of $\mathbb{Z}_{4}$ are distributed as uniformly as possible. Writing $M=4Q+e$ with $0\leq e<4$, the most uniform distribution consists of $e$ elements occurring $Q+1$ times and the remaining 
$q-e$ elements occurring $Q$ times.\\
\textbf{Case 1} Let $e=0$. Then $n_0=n_1=n_2=n_3=\frac{M}{4}$, and 
\begin{align*}
\sum_{a,b\in\mathbb{Z}_4}n_an_bd_h(a,b)\leq 2\left (\frac{M^2}{16}+2\frac{M^2}{16}+\frac{M^2}{16}+\frac{M^2}{16}+2\frac{M^2}{16}+\frac{M^2}{16}\right)=M^2.
\end{align*}
\textbf{Case 2} Let $e=1$. Then one out of $n_0,n_1,n_2,n_3$ is $Q+1=\frac{M-1}{4}+1$ and others are $Q=\frac{M-1}{4}$. WLOG, let $n_0=Q+1$, $n_1=n_2=n_3=Q$  Thus,
\begin{align*}
\sum_{a,b\in\mathbb{Z}_4}n_an_bd_h(a,b)\leq& 2 [(Q+1)Q+2(Q+1)Q+(Q+1)Q+Q^2+2Q^2+Q^2]\\
=&2[4(Q+1)Q+4Q^2]=M^2-1.
\end{align*}

\textbf{Case 3} Let $e=2$. Then two out of $n_0,n_1,n_2,n_3$ is $Q+1=\frac{M-2}{4}+1$ and other two are $Q=\frac{M-2}{4}$. WLOG, let $n_0=n_2=Q+1$, $n_1=n_3=Q$  Thus,
\begin{align*}
\sum_{a,b\in\mathbb{Z}_4}n_an_bd_h(a,b)\leq& 2 [(Q+1)Q+2(Q+1)^2+(Q+1)Q+Q(Q+1)+2Q^2+(Q+1)Q]\\
=&2[2(Q+1)^2+4(Q+1)Q+2Q^2]=M^2.
\end{align*}

\textbf{Case 4} Let $e=3$. Then three out of $n_0,n_1,n_2,n_3$ is $Q+1=\frac{M-3}{4}+1$ and other one is $Q=\frac{M-3}{4}$. WLOG, let $n_0=n_1=n_2=Q+1$, $n_3=Q$  Thus,
\begin{align*}
\sum_{a,b\in\mathbb{Z}_4}n_an_bd_h(a,b)\leq& 2 [(Q+1)^2+2(Q+1)^2+(Q+1)Q+(Q+1)^2+2(Q+1)Q+(Q+1)Q]\\
=&2[4(Q+1)^2+4(Q+1)Q]=M^2-1.
\end{align*}
By Eq. \ref{eqplot}, 
\begin{align*}
    \sum_{i,j}[D]_{ij}\leq\begin{cases}
        rM^2 & \text{ if }M\equiv 0,2 \pmod{4}  \\
        r(M^2-1) & \text{ if } M\equiv 1,3 \pmod{4}.
    \end{cases} 
\end{align*}
The above holds for any $D_h$-code of length $r$. Thus $N_h(D)\geq r$. This completes the proof.
\end{proof}

Let $P$ be a set and $\prec$ be a binary relation on $P$. Then $\prec$ is called a total order on $P$ if it satisfies (1) if $a\prec b$ and $b \prec a$, then $a=b$, (2) if $a\prec b$ and $b\prec c$, then $a\prec c$, (3) for any $a,b\in P$, either $a\prec b$, $b\prec a$, or $a=b$. We say $(P,\prec)$ is a total order set. A subset $I$ of a total order set $(P,\prec)$ is said to be a contiguous block if for all $a,b\in I$ with $a\prec b$ and for any $c\in P$ with  $a\prec c\prec b$, we have $c\in I$.

% \begin{lemma}\cite[Lemma 1]{Rajput2025}\label{lemma1frm11}
% Let $f:=\mathbb{F}_2^k\to \text{Im}(f)$ be a locally $(\lambda,\rho)$-function in the Hamming metric. Suppose there exists a total order $\prec$ on Im$(f)$ such that for every $x\in \mathbb{Z}_2^k$, the set $B_f(x,\rho)$  forms a contiguous block of consecutive elements with respect to $\prec$.  Then there exists a function $\tau_f:\mathbb{F}_2^k\to \text{Im}(f)$ such that for any $u,v\in \mathbb{F}_2^k$, if $f(u)\neq f(v)$ and $d_H(u,v)\leq \rho$,  then $\tau_f(u)\neq \tau_f(v)$.   
% \end{lemma}

The following lemma is fundamental to the construction of function-correcting codes. Its proof is obtained by mirroring the proof given in  \cite[Lemma 1]{Rajput2025} for the Hamming distance. Thus, we omit the proof.

\begin{lemma}\label{tau}
Let $f:\mathbb{Z}_{2^s}^k\to \text{Im}(f)$ be a locally $(\lambda,\rho)_h$-function. Suppose there exists a total order $\prec$ on Im$(f)$ such that for every $x\in \mathbb{Z}_{2^s}^k$, the set $B_f^h(x,\rho)$  forms a contiguous block of consecutive elements with respect to $\prec$. Then there exists a map $\tau_f^h:\mathbb{Z}_{2^s}^k\to [\lambda]$ such that for all $x,y\in\mathbb{Z}_{2^s}^k$ with $f(x)\neq f(y)$ and $d_ h(x,y)\leq \rho$, we have $\tau_f^h(x)\neq \tau_f^h(y)$.
\end{lemma}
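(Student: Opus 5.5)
The plan is to build $\tau_f^h$ by reducing the values of $f$ modulo $\lambda$ along the given total order. Since $\text{Im}(f)$ is finite and $\prec$ is a total order on it, list its elements increasingly as $e_0\prec e_1\prec\cdots\prec e_{L-1}$, with $L=|\text{Im}(f)|$. Write $\iota(e_m)=m$ for the rank of each value. Then define
\[
\tau_f^h(x) = \bigl(\iota(f(x)) \bmod \lambda\bigr)+1\in[\lambda].
\]
This is well defined: $f(x)$ determines $\tau_f^h(x)$ directly, so no choices are involved.

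To verify the separation property, fix $x,y\in\mathbb{Z}_{2^s}^k$ with $f(x)\neq f(y)$ and $d_h(x,y)\le\rho$. I would look at the function ball $B_f^h(x,\rho)$.

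1. It contains $f(x)$, since $d_h(x,x)=0$.
2. It contains $f(y)$, since $d_h(x,y)\le\rho$.
3. By hypothesis it is a contiguous block with respect to $\prec$. So if $\iota(f(x))=a$ and $\iota(f(y))=b$ with, say, $a<b$, then every $e_c$ with $a\le c\le b$ lies in $B_f^h(x,\rho)$. This gives $b-a+1\le |B_f^h(x,\rho)|$.
4. The $(\lambda,\rho)_h$ property bounds $|B_f^h(x,\rho)|\le\lambda$. Combining with step 3 gives $1\le b-a\le\lambda-1$.

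Consequently $a\not\equiv b \pmod{\lambda}$, and hence $\tau_f^h(x)\neq\tau_f^h(y)$.

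There is no substantial obstacle here. The only point needing care is that contiguity must be applied to the ball centred at one of the two points, here $x$, which contains both values. The argument uses only that $d_h$ is symmetric, which holds because $w_h(-a)=w_h(a)$ (the ideals generated by $a$ and $-a$ coincide). It does not depend on any other specific feature of the homogeneous metric, so it mirrors the Hamming-metric argument of \cite[Lemma 1]{Rajput2025} verbatim.
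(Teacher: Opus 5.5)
Your proof is correct and is essentially the argument the paper intends. The paper omits the proof and defers to Lemma~1 of \cite{Rajput2025} (Hamming metric), which colours each value by its rank along $\prec$ modulo $\lambda$. It then uses contiguity of the ball $B_f^h(x,\rho)$, of size at most $\lambda$ and containing both $f(x)$ and $f(y)$, to force the two ranks to differ by between $1$ and $\lambda-1$. One small remark: symmetry of $d_h$ is not actually needed, since you centre the ball at $x$ and apply the hypothesis $d_h(x,y)\le\rho$ directly.
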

It is easy to see that the contiguous block condition in the above lemma always holds for the homogeneous weight and the homogeneous weight distribution function. Next, we construct several FCCHD and establish bounds on the optimal redundancy for locally bounded functions.

\begin{proposition}\label{lambda4}
 Let $t>0$ be an integer and $f:\mathbb{Z}_{2^s}^k\to \text{Im}(f)$ be a locally $(4,2t)_h$-function satisfying contiguous block condition in Lemma \ref{tau}. Then, there exists an FCCCHD with redundancy $r_f^h(k,t) = 2t$.
\end{proposition}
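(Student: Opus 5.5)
The plan is to build the code explicitly from the map $\tau_f^h$ of Lemma \ref{tau}. Since $f$ is a locally $(4,2t)_h$-function satisfying the contiguous block condition, Lemma \ref{tau} with $\lambda=4$ and $\rho=2t$ gives a map $\tau_f^h:\mathbb{Z}_{2^s}^k\to[4]$. This map satisfies $\tau_f^h(x)\neq\tau_f^h(y)$ whenever $f(x)\neq f(y)$ and $d_h(x,y)\leq 2t$. I would then choose four redundancy vectors $p_1,p_2,p_3,p_4\in\mathbb{Z}_{2^s}^{2t}$ with pairwise homogeneous distance at least $2t$, and define
\[
\mathrm{Enc}(x)=\bigl(x,p_{\tau_f^h(x)}\bigr).
\]

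The verification splits into two cases for $x,y$ with $f(x)\neq f(y)$.
\begin{enumerate}
\item If $d_h(x,y)\geq 2t+1$, the systematic part alone gives $d_h(\mathrm{Enc}(x),\mathrm{Enc}(y))\geq 2t+1$.
\item If $d_h(x,y)\leq 2t$, then $x\neq y$, so $d_h(x,y)\geq 1$. Also $\tau_f^h(x)\neq\tau_f^h(y)$, so the parity parts are distinct codewords $p_i\neq p_j$. Hence $d_h(\mathrm{Enc}(x),\mathrm{Enc}(y))\geq 1+2t$.
\end{enumerate}
So what is needed is exactly a code of size $4$ and length $2t$ with minimum homogeneous distance $\geq 2t$.

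For $s\geq 2$ the ring $\mathbb{Z}_{2^s}$ has at least four elements. I would take the repetition vectors $p_i=(i-1,i-1,\dots,i-1)$ for $i=1,2,3,4$, that is, the constant vectors with entries $0,1,2,3$. By \eqref{hweight}, every nonzero element has homogeneous weight at least $1$. Distinct $p_i,p_j$ differ in all $2t$ coordinates, so $d_h(p_i,p_j)\geq 2t$, and the redundancy is $2t$. This yields $r_f^h(k,t)\leq 2t$, which is what the statement asserts via the existence of an FCCHD with redundancy $2t$.

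The main obstacle is the case $s=1$. There $\mathbb{Z}_2$ has only two symbols and $d_h$ is the Hamming distance. Four binary words of length $2t$ cannot be pairwise at distance $2t$, since by Plotkin four such words would need length at least $3t$. So the construction genuinely requires $s\geq 2$, and I would state that assumption explicitly. The construction gives only the upper bound $2t$, not a matching lower bound. If exact optimality were intended, I would try to prove $r_f^h(k,t)\geq 2t$ with Corollary \ref{nbnh} and Theorem \ref{plotkinimp4}, using four points whose $f$-values are pairwise distinct at mutual distance $1$. That requires extra structure on $f$, so I would treat the claim as the upper bound.
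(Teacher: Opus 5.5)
Your proof is correct and is essentially the paper's argument. You take $\tau_f^h$ from Lemma \ref{tau}, append one of four parity words of length $2t$ with pairwise homogeneous distance at least $2t$, and split on whether $d_h(x,y)\geq 2t+1$. The only difference is the choice of parity words: the paper uses the $t$-fold repetitions of $00,0a,a0,aa$ with $a=2^{s-1}$, instead of your constant words with entries $0,1,2,3$.

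By \eqref{hweight}, $w_h(2^{s-1})=2$ for every $s$, including $w_h(1)=2$ in $\mathbb{Z}_2$. So $d_h=2d_H$ on $\mathbb{Z}_2$, and the paper's words also work for $s=1$. Your restriction to $s\geq 2$ therefore comes from your choice of parity words, not from the statement. Your Plotkin objection holds only if $d_h$ on $\mathbb{Z}_2$ is read literally as the Hamming distance, as the paper's loose remark suggests.
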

\begin{proof}
 By Lemma \ref{tau}, there exists a map $\tau_f^h:\mathbb{Z}_{2^s}^k\to [4]$ such that $\tau_f^h(x)\neq\tau_f^h(y)$ for all $x,y\in \mathbb{Z}_{2^s}^k$ whenever $ f(x)\neq f(y)$ and $d_h(x,y)\leq 2t$. For $x\in \mathbb{Z}_{2^s}^k$, define parity bits $x_p'$ corresponding to $x$ as follows
$$x_p'= \begin{cases}
00\hspace{1 cm}\text{ if } \hspace{0.5 cm}\tau_f^h(x)=1\\
0 a\hspace{1 cm}\text{ if } \hspace{0.5 cm}\tau_f^h(x)=2\\
a0\hspace{1 cm}\text{ if } \hspace{0.5 cm}\tau_f^h(x)=3\\
aa\hspace{1 cm}\text{ if } \hspace{0.5 cm}\tau_f^h(x)=4,
 \end{cases} \text{ where } a=2^{s-1}.$$
Let  $x_p=(x_p')^t\in \mathbb{Z}_{2^s}^{2t}$, the $t$ fold repetition of $x_p'$. Now, we show that $x_p$ define an FCCHD via following encoding map $Enc:\mathbb{Z}_{2^s}^k\to \mathbb{Z}_{2^s}^{k+2t}$ defined as $Enc(x)=(x,x_p)$.
For $x,y\in \mathbb{Z}_{2^s}^k$ with $f(x)\neq f(y)$, we have \\
\textbf{Case 1}: If $d_h(x,y)\geq 2t+1$, then 
$$d_h(Enc(x),Enc(y))\geq d_h(x,y)+d_h(x_p,y_p)\geq 2t+1.$$
\textbf{Case 2}: If $d_h(x,y)\leq 2t$, then $\tau_f^h(x)\neq\tau_f^h(y)$.  Thus, $x_p\neq y_p$. Consequently,
\begin{align*}
    \begin{split}
d_h(Enc(x),Enc(y))=d_h(x,y)+d_h(x_p,y_p)\geq 2t+1.  
    \end{split}
\end{align*}
By definition of FCCHD, $Enc$ defines an FCCHD with redundancy $2t$.
\end{proof}

\begin{theorem}\label{reN_h}
    Let $f:\mathbb{Z}_{2^s}^k\to \text{Im}(f)$ be a locally $(\lambda,\rho)_h$-function satisfying the contiguous block condition in Lemma \ref{tau}. Then $r^h_f(k,t)\leq N_h(\lambda,2t)$, where $N_h(\lambda
    ,2t)$ is the minimum length of a code with $\lambda$ codewords with minimum homogeneous distance $2t$.
\end{theorem}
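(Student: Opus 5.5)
The plan is to mimic the proof of Proposition \ref{lambda4}, replacing the explicit repetition code by an optimal code with $\lambda$ codewords. I will read the statement with $\rho=2t$: $f$ is a locally $(\lambda,2t)_h$-function satisfying the contiguous block condition. This is the setting in which Lemma \ref{tau} separates exactly the pairs that a function-correcting code must handle.

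\textbf{Step 1 (colouring).} Apply Lemma \ref{tau} with $\rho=2t$. This gives a map $\tau_f^h:\mathbb{Z}_{2^s}^k\to[\lambda]$ such that $\tau_f^h(x)\neq\tau_f^h(y)$ whenever $f(x)\neq f(y)$ and $d_h(x,y)\le 2t$.

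\textbf{Step 2 (parity code).} Let $r=N_h(\lambda,2t)$. Fix a code $P=\{p_1,\dots,p_\lambda\}\subseteq\mathbb{Z}_{2^s}^{r}$ with $\lambda$ distinct codewords and $d_h(p_i,p_j)\ge 2t$ for $i\neq j$; such a code exists by definition of $N_h(\lambda,2t)$. Define the encoding $Enc(x)=(x,p_{\tau_f^h(x)})$, which has redundancy $r$.

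\textbf{Step 3 (verification).} Let $x,y$ satisfy $f(x)\neq f(y)$; in particular $x\neq y$, so $d_h(x,y)\ge 1$.
\begin{itemize}
\item If $d_h(x,y)\ge 2t+1$, then $d_h(Enc(x),Enc(y))\ge d_h(x,y)\ge 2t+1$, since the homogeneous distance is additive over coordinates and nonnegative.
\item If $1\le d_h(x,y)\le 2t$, then $\tau_f^h(x)\neq\tau_f^h(y)$ by Step 1. Hence the parity parts are distinct codewords of $P$, and
\[
d_h(Enc(x),Enc(y))=d_h(x,y)+d_h\bigl(p_{\tau_f^h(x)},p_{\tau_f^h(y)}\bigr)\ge 1+2t .
\]
\end{itemize}
So $Enc$ defines an $(f,t)$-FCCHD, and therefore $r_f^h(k,t)\le N_h(\lambda,2t)$.

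\textbf{Main point of care.} The only subtle step is seeing why minimum distance $2t$, rather than $2t+1$, suffices for the parity code. It works because $f(x)\neq f(y)$ forces $x\neq y$, and the systematic part then contributes at least $1$ to the distance. It should also be stated explicitly that the locality radius must be $\rho=2t$ (or at least $\rho\ge 2t$) for Lemma \ref{tau} to cover all the close pairs in Step 3.
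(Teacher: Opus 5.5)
Your proof is correct and follows essentially the same route as the paper: apply Lemma \ref{tau} with radius $2t$ to obtain $\tau_f^h$, append the codeword $C_{\tau_f^h(x)}$ of a length-$N_h(\lambda,2t)$ code with $\lambda$ codewords and minimum homogeneous distance $2t$, and split into the cases $d_h(x,y)\ge 2t+1$ and $d_h(x,y)\le 2t$. You also make explicit two points the paper leaves tacit: that $x\neq y$ supplies the extra unit of distance, and that the statement implicitly needs $\rho=2t$ (or $\rho\ge 2t$).
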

\begin{proof}
 By Lemma \ref{tau}, there exists a map $\tau_f^h:\mathbb{Z}_{2^s}^k\to [\lambda]$ such that for all $x,y\in\mathbb{Z}_{2^s}^k$ with $f(x)\neq f(y)$ and $d_ h(x,y)\leq 2t$, we have $\tau_f^h(x)\neq \tau_f^h(y)$. Let $C$ be a code of length $N_h(\lambda,2t)$ with $\lambda$ codewords and minimum homogeneous distance $2t$. We enumerate the codewords of $C$ as $C_1,C_2,\dots,C_{\lambda}$. Let 
 \begin{align*}
     Enc:\mathbb{Z}_{2^s}^k\to \mathbb{Z}_{2^s}^{k+N_h(\lambda,2t)}
 \end{align*}
 be an encoding defined by $Enc(x)=(x,x_p)$, where $x_p=C_{\tau_f^h(x)}$. Now we prove that the set $Enc(\mathbb{Z}_{2^s}^k)$ is an $(f,t)$-FCCHD with redundancy $r=N_h(\lambda,2t)$. Let $x,y\in \mathbb{Z}_{2^s}^k$ with $f(x)\neq f(y)$. If $d_h(x,y) \geq 2t+1$, then 
 \begin{align*}
     d_h(Enc(x),Enc(y))=d_h(x,y)+d_h(x_p,y_p)\geq 2t+1.
 \end{align*}
Else, if $d_h(x,y)\leq 2t$, then  $\tau_f^h(x)\neq \tau_f^h(y)$. That is, $x_p\neq y_p$. Thus,
\begin{align*}
    d_h(Enc(x),Enc(y))=d_h(x,y)+d_h(x_p,y_p)\geq 2t+1.
\end{align*}
By definition of FCCHD, $Enc$ define an $(f,t)$-FCCHD with redundancy $N_h(\lambda,2t)$. Hence, $r_f^h(k,2t)\leq N_h(\lambda,2t)$. 
\end{proof}

From the preceding theorem, the problem of determining the optimal redundancy of an $(f,t)$-FCCHD for locally $(\lambda,2t)_h$-functions reduces to determining $N_h(\lambda,2t)$. In the next theorem, we establish an upper bound on $N_h(\lambda,2t)$ by presenting an explicit construction of codes comprising  $\lambda$ codewords with minimum homogeneous distance at least $2t$. 

\begin{theorem}\label{bound_lambdat}
    Let $N_h(\lambda, 2t)$ be the minimum possible length of a code over $\mathbb{Z}_{2^s}^k$ with $\lambda$ codewords and minimum homogeneous distance $2t$. Then, 
    \begin{align*}
        N_h (\lambda, 2t) \leq \begin{cases}
            \frac{\lambda t}{2} & \text{if } t \text{ is even}\\
            \frac{\lambda (t+1)-2}{2} & \text{if } t \text{ is odd}.
        \end{cases}
    \end{align*}
\end{theorem}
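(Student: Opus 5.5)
The plan is to give an explicit construction of $\lambda$ codewords in $\mathbb{Z}_{2^s}^{n}$ and compute their pairwise homogeneous distances. The construction uses only the symbols $0$ and $a=2^{s-1}$. By \eqref{hweight}, $w_h(a)=w_h(-a)=2$, since $a\in\langle 2^{s-1}\rangle\setminus\{0\}$; this holds also for $s=1$, where $a=1$. Each codeword will have support on its own block of coordinates, with the blocks pairwise disjoint. Any two codewords then differ exactly on the union of their two blocks, and on each of those coordinates the difference is $\pm a$, of weight $2$.

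\textbf{Key computation.} Suppose codeword $c_i$ equals $a$ on a block $S_i$ of size $m_i$ and $0$ elsewhere, and the blocks $S_1,\dots,S_\lambda$ are pairwise disjoint. Then for $i\neq j$,
\begin{align*}
d_h(c_i,c_j)=\sum_{l\in S_i\cup S_j} w_h(\pm a)=2(m_i+m_j),
\end{align*}
and the length is $n=\sum_i m_i$. It therefore suffices to choose nonnegative integers $m_i$ with $m_i+m_j\geq t$ for all $i\neq j$, minimising $\sum_i m_i$.

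\textbf{Case $t$ even.} Take $m_i=t/2$ for every $i$. Then $m_i+m_j=t$, so every pairwise distance equals $2t$, and the length is $\lambda t/2$.

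\textbf{Case $t$ odd.} Take $m_1=(t-1)/2$ and $m_i=(t+1)/2$ for $i\geq 2$. Every pair satisfies $m_i+m_j\geq t$: a pair involving index $1$ gives exactly $t$, and any other pair gives $t+1$. The length is
\begin{align*}
\frac{t-1}{2}+(\lambda-1)\frac{t+1}{2}=\frac{\lambda(t+1)-2}{2}.
\end{align*}

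In both cases the code has $\lambda$ distinct codewords and minimum homogeneous distance at least $2t$, which gives the stated upper bound on $N_h(\lambda,2t)$. The distinctness needs $t\geq 2$, or $\lambda\le 2$ when $t=1$: for $t=1$ we have $m_1=0$, so only one codeword may have an empty block.

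\textbf{Main obstacle.} The only delicate point is the odd case. There the argument relies on the observation that exactly one block may be shortened. Two shortened blocks would give $m_i+m_j=t-1<t$. No further steps are needed beyond this verification of the constraints $m_i+m_j\geq t$.
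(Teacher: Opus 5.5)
Your construction is the same as the paper's, so the proof is correct and follows the same route: codewords equal to $2^{s-1}$ on pairwise disjoint blocks, all of size $t/2$ when $t$ is even, and for odd $t$ one block of size $(t-1)/2$ with the remaining $\lambda-1$ blocks of size $(t+1)/2$. One small correction: your caveat restricting $t=1$ to $\lambda\le 2$ is unnecessary, because $m_i+m_j\geq t\geq 1$ already gives $d_h(c_i,c_j)\geq 2t>0$, so for $t=1$ the code $\{\mathbf{0},\,2^{s-1}e_1,\dots,2^{s-1}e_{\lambda-1}\}$ of length $\lambda-1$ works for every $\lambda$.
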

\begin{proof}
Suppose $t$ is even. Let $\bm{0}=(0,0,\dots,0)$ and $\bm{a}=(2^{s-1},2^{s-1},\dots,2^{s-1})$ be vectors of length $\frac{t}{2}$ over $\mathbb{Z}_{2^s}$. Observe that $w_h(\bm{a})=t$.  Let $C=\{C_1,C_2,\dots,C_{\lambda}\}$ be a code over $\mathbb{Z}_{2^s}$ of length $\frac{\lambda t}{2}$, where $$C_i=(\bm{0},\bm{0},\dots, \underbrace{\bm{a}}_{i^{\mathrm{th}}\text{-place}},\bm{0},\dots,\bm{0})\in (\mathbb{Z}_{2^s}^{\frac{t}{2}})^{\lambda},$$ that is, $C_1=(\bm{a},\bm{0},\dots,\bm{0})$, $C_2=(\bm{0},\bm{a},\dots,\bm{0})$, and $C_{\lambda}=(\bm{0},\bm{0},\dots,\bm{a})$. Observe that the homogeneous distance of $C$ is $d_h(C)=2t$. Thus, $N_h(\lambda,2t)\leq \frac{\lambda t}{2}$. \\
Let $t$ be odd. Define a homogeneous code $C=\{C_1,C_2,\dots,C_{\lambda}\}$ of length $\frac{\lambda (t+1)-2}{2}$, where
\begin{align*}
    \begin{split}
        &C_1=(\underbrace{2^{s-1},2^{s-1},\dots,2^{s-1}}_{\frac{t-1}{2}\text-{coordinates}},\underbrace{0,0,\dots,0}_{\frac{(\lambda-1)(t+1)}{2}-\text{coordinates}})\\
        &C_2=(\underbrace{0,0,\dots,0}_{\frac{t-1}{2}\text{-coordinates}},\underbrace{2^{s-1},2^{s-1},\dots,2^{s-1}}_{\frac{t+1}{2}-\text{coordinates}},\underbrace{0,0,\dots,0}_{\frac{(\lambda-2)(t+1)}{2}-\text{coordinates}})\\
        &C_3=(\underbrace{0,0,\dots,0}_{t-\text{coordinates}},\underbrace{2^{s-1},2^{s-1},\dots,2^{s-1}}_{\frac{t+1}{2}-\text{coordinates}},\underbrace{0,0,\dots,0}_{\frac{(\lambda-3)(t+1)}{2}-\text{coordinates}})\\
         &C_4=(\underbrace{0,0,\dots,0}_{\frac{3t+1}{2}-\text{coordinates}},\underbrace{2^{s-1},2^{s-1},\dots,2^{s-1}}_{\frac{t+1}{2}-\text{coordinates}},\underbrace{0,0,\dots,0}_{\frac{(\lambda-4)(t+1)}{2}-\text{coordinates}})\\
         &\vdots\\
     \text{for } i\geq 2, \ \   &C_i=(\underbrace{0,0,\dots,0}_{\frac{(i-2)(t+1)+(t-1)}{2}-\text{coordinates}},\underbrace{2^{s-1},2^{s-1},\dots,2^{s-1}}_{\frac{t+1}{2}-\text{coordinates}},\underbrace{0,0,\dots,0}_{\frac{(\lambda-i)(t+1)}{2}-\text{coordinates}}) \\
         &\vdots\\
         &C_{\lambda}=(\underbrace{0,0,\dots,0}_{\frac{(\lambda -2)(t+1)+(t-1)}{2}-\text{coordinates}},\underbrace{2^{s-1},2^{s-1},\dots,2^{s-1}}_{\frac{t+1}{2}-\text{coordinates}}).
    \end{split}
\end{align*}
Note that $d_h(C_i,C_j)\geq 2t$, $i\neq j$ and $d_h(C_1,C_2)=2t$. Therefore, $d_h(C)=2t$. Hence $N_h(\lambda,2t)\leq \frac{\lambda (t+1)-2}{2} $.
\end{proof}
Using Theorem \ref{reN_h} and Theorem \ref{bound_lambdat}, we have the following bound on the optimal redundancy for locally $(\lambda,2t)_h$-functions.

    \begin{corollary}\label{optimal-upperbound}
    Let $t$ be a positive integer. For a locally $(\lambda, 2t)_h$-function $f$ satisfying the contiguous block condition in Lemma \ref{tau}, the optimal redundancy of $(f, t)$-FCCHDs is bounded above by
    \begin{align*}
        r_f^h (k, t) \leq \begin{cases}
            \frac{\lambda t}{2} & \text{if } t \text{ is even}\\
            \frac{\lambda (t+1)-2}{2} & \text{if } t \text{ is odd}.
        \end{cases}
    \end{align*}
    \end{corollary}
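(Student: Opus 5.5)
This corollary follows by chaining Theorem \ref{reN_h} with Theorem \ref{bound_lambdat}.

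First, note that $f$ is a locally $(\lambda,2t)_h$-function satisfying the contiguous block condition of Lemma \ref{tau}. This is precisely the hypothesis of Theorem \ref{reN_h} with $\rho=2t$. That theorem gives
\[
r_f^h(k,t)\le N_h(\lambda,2t).
\]
It does so by composing the map $\tau_f^h:\mathbb{Z}_{2^s}^k\to[\lambda]$ with any code of $\lambda$ codewords and minimum homogeneous distance $2t$. For a pair with $f(x)\neq f(y)$ and $d_h(x,y)\le 2t$, the two redundancy parts are distinct codewords. I should double-check that their distance is large enough, since the paper only writes $x_p\neq y_p$. In fact $d_h(x,y)\ge 1$ (because $f(x)\neq f(y)$ forces $x\neq y$) and $d_h(x_p,y_p)\ge 2t$, so the sum is at least $2t+1$, as required.

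Second, I bound $N_h(\lambda,2t)$ using the explicit constructions of Theorem \ref{bound_lambdat}. For even $t$, the blocks $\bm{a}=(2^{s-1},\dots,2^{s-1})$ of length $t/2$ are placed in disjoint positions. Each such block has homogeneous weight $t$, so any two distinct codewords are at distance $2t$, using total length $\lambda t/2$. For odd $t$, the staggered construction has codeword supports of sizes $\frac{t-1}{2}$ and $\frac{t+1}{2}$, which are pairwise disjoint. Any two codewords are therefore at distance at least $2\left(\frac{t-1}{2}+\frac{t+1}{2}\right)=2t$, with total length $\frac{t-1}{2}+(\lambda-1)\frac{t+1}{2}=\frac{\lambda(t+1)-2}{2}$.

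Substituting these bounds into the inequality from the first step yields the two cases of the corollary. There is no real obstacle here. The only care needed is the distance bookkeeping mentioned above: the minimum distance $2t$ of the auxiliary code, together with $d_h(x,y)\ge1$, is exactly enough to reach $2t+1$.
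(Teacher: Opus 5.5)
Your proposal is correct and matches the paper's argument, which obtains the corollary by applying Theorem \ref{reN_h} with $\rho=2t$ to get $r_f^h(k,t)\le N_h(\lambda,2t)$ and then substituting the explicit length bounds of Theorem \ref{bound_lambdat}. Your extra check is a useful addition: the paper's proof of Theorem \ref{reN_h} only states $x_p\neq y_p$, and you supply the missing step that $d_h(x,y)\ge 1$ together with $d_h(x_p,y_p)\ge 2t$ gives distance at least $2t+1$.
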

  
    The above bound is tight for $\lambda=2,3$ as shown in the subsequent results.
\begin{corollary}\cite[Theorem 5.1]{liu2025function}
    Let $f:\mathbb{Z}_{2^s}^k\to \text{Im}(f)$ be a locally $(2,2t)_h$-function. Then, the optimal redundancy of an $(f,t)$-FCCHD is $r_f^h(k,t)=t$.
\end{corollary}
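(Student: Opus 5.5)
The statement is the equality $r_f^h(k,t)=t$ for a locally $(2,2t)_h$-function $f$. I would prove it as a matching upper bound and lower bound.

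\emph{Upper bound.} The plan is to build an explicit encoding with redundancy $t$.

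First, since $|B_f^h(x,2t)|\le 2$ for all $x$, I will take a two-colouring $\tau:\mathbb{Z}_{2^s}^k\to\{1,2\}$ such that $\tau(x)\neq\tau(y)$ whenever $f(x)\neq f(y)$ and $d_h(x,y)\le 2t$. With only two values in each ball, any total order on the two image values is trivially contiguous, so I would invoke Lemma \ref{tau} with $\lambda=2$. If one prefers to avoid the contiguity hypothesis, the direct route is the one in \cite[Lemma 5]{Lenz2023}: fix an ordering of $\text{Im}(f)$ and set $\tau(x)=1$ if $f(x)$ is the smaller element of $B_f^h(x,2t)$, and $\tau(x)=2$ otherwise. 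Then check that $f(x)\ne f(y)$ with $d_h(x,y)\le 2t$ forces $B_f^h(x,2t)=B_f^h(y,2t)=\{f(x),f(y)\}$, so $x$ and $y$ receive different colours.

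Second, I define the parity vector as $x_p=\mathbf{0}\in\mathbb{Z}_{2^s}^t$ if $\tau(x)=1$, and $x_p=(2^{s-1},\dots,2^{s-1})\in\mathbb{Z}_{2^s}^t$ if $\tau(x)=2$. These two parity words are at homogeneous distance $2t$. The case split of Proposition \ref{lambda4} then applies verbatim:
\begin{itemize}
\item if $d_h(x,y)\ge 2t+1$, the encoded distance is already at least $2t+1$;
\item if $1\le d_h(x,y)\le 2t$ (here $x\neq y$ because $f(x)\ne f(y)$), the parity part adds $2t$, so the total is at least $2t+1$.
\end{itemize}
Hence $r_f^h(k,t)\le t$. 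This also matches Theorem \ref{reN_h} together with $N_h(2,2t)\le t$.

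\emph{Lower bound.} Corollary \ref{nbnh} already gives $r_f^h(k,t)\ge t$ when $s\ge2$ and $|\text{Im}(f)|\ge 2$. If I were to redo it, I would take $x_1,x_2$ with $f(x_1)\neq f(x_2)$ and $d_h(x_1,x_2)=1$, so that $[D]_{12}=2t$. Such a pair exists because $f$ is nonconstant: some two adjacent vectors along a path of unit-weight steps must differ in $f$. Then Lemma \ref{plotkin} gives $N_h(D)\ge \frac{1}{4}\cdot 2\cdot 2t=t$.

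\emph{Main obstacle.} I expect the delicate step to be the construction of the colouring $\tau$ without assuming contiguity, that is, verifying the ball-equality argument. A secondary point is the degenerate case $s=1$, where a single coordinate carries weight at most $1$ and the bound becomes the Hamming value $2t$. This suggests the statement implicitly assumes $s\ge 2$, and I would state that hypothesis explicitly.
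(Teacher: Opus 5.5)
Your proof is correct as long as you use your second, direct colouring. On that step it differs from, and is stronger than, the paper's route.

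\textbf{The paper's route.} The paper gets the upper bound by putting $\lambda=2$ in Corollary \ref{optimal-upperbound}. That is, it combines Lemma \ref{tau}, Theorem \ref{reN_h}, and the two-word code of Theorem \ref{bound_lambdat}, which has length $t$ for both parities of $t$. The lower bound comes from Corollary \ref{nbnh}. This route needs the contiguous block condition, which is not in the statement.

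\textbf{Contiguity is not automatic for $\lambda=2$.} Contiguity asks for one total order on all of $\text{Im}(f)$. Suppose $f\equiv a$ except at three points, pairwise more than $4t$ apart, carrying values $b,c,d$. Then $f$ is a $(2,2t)_h$-function whose balls realize $\{a,b\}$, $\{a,c\}$ and $\{a,d\}$. No total order makes $a$ adjacent to all three. So your claim that ``any total order on the two image values is trivially contiguous'' is false, and the first route should be discarded.

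\textbf{Your direct route.} Your Lenz-style colouring sets $\tau(x)=1$ iff $f(x)$ is the smaller element of $B_f^h(x,2t)$. It is justified by $B_f^h(x,2t)=B_f^h(y,2t)=\{f(x),f(y)\}$, and this is exactly what removes the contiguity hypothesis. The price is that it works only for $\lambda=2$, whereas the paper's machinery is uniform in $\lambda$. Your parity pair $\mathbf{0}$, $(2^{s-1},\dots,2^{s-1})\in\mathbb{Z}_{2^s}^t$ also avoids the even/odd split of Theorem \ref{bound_lambdat}. Your lower bound is the same as the paper's: Lemma \ref{plotkin} applied to two adjacent vectors with different $f$-values.

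\textbf{Side remarks to adjust.} The hypothesis the statement actually needs is $|\text{Im}(f)|\ge 2$, which you use silently in the lower bound. For constant $f$ every encoding is an FCCHD, so $r_f^h(k,t)=1$.

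By contrast, $s\ge 2$ is not needed under \eqref{hweight}. For $s=1$ the element $2^{s-1}=1$ has weight $2$, so your parity words are still $2t$ apart. Adjacent vectors are then at distance $2$, which gives $N_h(D)\ge\lceil (2t-1)/2\rceil=t$. The problem you anticipate arises only if one reads literally the paper's remark that $d_h$ coincides with $d_H$ for $s=1$. That remark is off by a factor $2$ under the normalization $\alpha=1$.
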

% \begin{proof}
%  By putting $\lambda=2$ in Corollary \ref{optimal-upperbound}, we get the desired result.
% \end{proof}

% The bound on optimal redundancy in Corollary \ref{optimal-upperbound} is tight for $(3,2t)_h$-functions over $\mathbb{Z}_{4}^k$.
    
\begin{theorem}\label{optimality_attained3}
    Let  $f:\mathbb{Z}_{4}^k\to \text{Im}(f)$ be a locally $(3,2t)_h$-function satisfying the contiguous block condition in Lemma \ref{tau}. If $|\text{Im}(f)| \geq 3$ and there exist $x_1, x_2, x_3 \in \mathbb{Z}_{2^s}^k$ with distinct function values and 
    $d_h(x_1, x_2) =1, d_h(x_3, x_1)=1 \ \text{and} \ d_h(x_3, x_2)=2$.
     Then the optimal redundancy of an $(f,t)$-FCCHD is \begin{align*}
        r_f^h (k, t)= \begin{cases}
            \frac{3 t}{2} & \text{if } t \text{ is even}\\
            \frac{3 (t+1)-2}{2} & \text{if } t \text{ is odd}.
        \end{cases} 
        \end{align*}
    \end{theorem}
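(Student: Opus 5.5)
The plan is to sandwich $r_f^h(k,t)$ between a matching upper bound and lower bound. The upper bound comes directly from Corollary \ref{optimal-upperbound} with $\lambda=3$. Since $f$ is a locally $(3,2t)_h$-function satisfying the contiguous block condition of Lemma \ref{tau}, that corollary gives $r_f^h(k,t)\le \frac{3t}{2}$ for even $t$ and $r_f^h(k,t)\le \frac{3(t+1)-2}{2}=\frac{3t+1}{2}$ for odd $t$. Concretely, this is realized by Theorem \ref{reN_h} together with the explicit three-word code of Theorem \ref{bound_lambdat}. So all the work is in the lower bound.

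For the lower bound I will use Corollary \ref{nbnh} with $M=3$ and the three given vectors $x_1,x_2,x_3\in\mathbb{Z}_4^k$, which gives $r_f^h(k,t)\ge N_h(D)$ with $D=D_f^h(t,x_1,x_2,x_3)$. Because the function values are pairwise distinct, every off-diagonal entry is nonzero. Using $d_h(x_1,x_2)=d_h(x_1,x_3)=1$ and $d_h(x_2,x_3)=2$, the entries are $[D]_{12}=[D]_{13}=2t$ and $[D]_{23}=2t-1$. Summing over ordered pairs gives
\begin{align*}
\sum_{i,j}[D]_{ij}=2(2t+2t+2t-1)=12t-2.
\end{align*}

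Since $s=2$ and $M=3\equiv 3 \pmod 4$, the improved Plotkin-like bound of Theorem \ref{plotkinimp4} applies with denominator $M^2-1=8$. This is the essential point: the weaker Lemma \ref{plotkin} would only give denominator $9$, which is not enough. We get $N_h(D)\ge \frac{12t-2}{8}=\frac{6t-1}{4}$, and since redundancy is an integer, $r_f^h(k,t)\ge \lceil (6t-1)/4\rceil$.

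The final step is the rounding check, which I expect to be the only delicate point, because the bound matches the upper bound only after taking ceilings. For $t=2m$ we have $\frac{6t-1}{4}=3m-\frac14$, whose ceiling is $3m=\frac{3t}{2}$. For $t=2m+1$ we have $\frac{6t-1}{4}=3m+\frac54$, whose ceiling is $3m+2=\frac{3t+1}{2}$. In both parities the lower bound meets the upper bound, which proves the stated formula for $r_f^h(k,t)$.
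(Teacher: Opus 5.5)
Your proposal is correct and follows the paper's proof essentially step for step. The upper bound comes from Corollary~\ref{optimal-upperbound} with $\lambda=3$, and the lower bound from Corollary~\ref{nbnh} combined with the $\mathbb{Z}_4$ Plotkin-like bound of Theorem~\ref{plotkinimp4} at $M=3$ (denominator $8$), applied to the requirement matrix with off-diagonal entries $2t,2t,2t-1$, followed by the same integer rounding in each parity of $t$.
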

    \begin{proof}
     For given $x_1, x_2, x_3 \in \mathbb{Z}_{4}^k$ with assumptions, the distance requirement matrix in the homogeneous metric is
    $$D_f^h(t, x_1, x_2, x_3) = \left[\begin{matrix} 0 & 2t & 2t \\
    2t & 0& 2t-1 \\
    2t &2t-1 & 0  
    \end{matrix}\right].$$
  By Theorem \ref{plotkinimp4}, we have 
    \begin{align*}
    N_h(D_f^h(t, x_1, x_2, x_3)) &\geq \frac{1}{3^2-1} \sum_{1\leq i,j\leq 3}[D]_{ij}  \\
    & = \frac{1}{8} (12t-2). 
    \end{align*}
\text{Case 1}: Let $t=2\ell$ for some positive integer $\ell$. Then
  \begin{align*}
    N_h(D_f^h(t, x_1, x_2, x_3)) &\geq 
   \frac{1}{8} (24\ell-2)=3\ell-\frac{2}{8}. 
    \end{align*}
\text{Case 2}: Let $t=2\ell+1$ for some positive integer $\ell$. Then
  \begin{align*}
    N_h(D_f^h(t, x_1, x_2, x_3)) &\geq 
   \frac{1}{8} (24\ell+12-2)=3\ell+\frac{10}{8}. 
    \end{align*}   
    
As $N_h(D_f^h(t, x_1, x_2, x_3))$ is an integer, therefore
\begin{align*}
      N_h(D_f^h(t, x_1, x_2, x_3)) \geq \begin{cases}
            3\ell & \text{if } t=2\ell\\
            3\ell+2 & \text{if } t=2\ell+1.
        \end{cases}
    \end{align*}

By Corollary \ref{nbnh}, we have 
    \begin{align*}
    r_f^h(k,t) \geq \begin{cases}
            3\ell & \text{if } t=2\ell\\
            3\ell+2 & \text{if } t=2\ell+1
        \end{cases}=\begin{cases}
            \frac{3 t}{2} & \text{if } t \text{ is even}\\
            \frac{3 (t+1)-2}{2} & \text{if } t \text{ is odd}.
        \end{cases}
         \end{align*} 
    The desired result follows from Corollary \ref{optimal-upperbound}.
    \end{proof}

The homogeneous weight and homogeneous weight distribution function satisfy the contiguous block condition in Lemma \ref{tau}. Thus, we have the following.

\begin{corollary}
Let $f(x)=w_h(x), x\in \mathbb{Z}_{2^s}^k$ be the homogeneous weight function over $\mathbb{Z}_{2^s}^k$. Then there is an $(f,t)$-FCCHD with redundancy 
\begin{align*}
=\begin{cases}
            \frac{\left(4t+1\right) t}{2} & \text{if } t \text{ is even}\\
            \frac{\left(4t+1\right) (t+1)-2}{2} & \text{if } t \text{ is odd}.
        \end{cases}
\end{align*}
\end{corollary}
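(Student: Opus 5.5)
The statement is a direct specialization of Corollary \ref{optimal-upperbound} with $\lambda = 4t+1$, so the plan is to check its two hypotheses for $f = w_h$ with $\rho = 2t$ and then substitute.

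\textbf{Locality.} By Proposition \ref{proplocalhwt}, applied with $\rho = 2t$, the homogeneous weight function is a locally $(2\rho+1,\rho)_h = (4t+1, 2t)_h$-function. Concretely, inequality (\ref{hwt}) gives
\[
B_{w_h}^h(x,2t)\subseteq \{w_h(x)-2t,\dots,w_h(x)+2t\}\cap\{0,1,\dots,2k\},
\]
and this set has at most $4t+1$ elements.

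\textbf{Contiguous block condition.} Take the natural order $<$ on $\text{Im}(w_h)\subseteq\{0,1,\dots,2k\}$. We must show that every function ball $B_{w_h}^h(x,2t)$ is an interval of integers. This is the only step needing an argument, and I expect it to be the main (though mild) obstacle. Suppose $w_h(y)=w$ with $d_h(x,y)\le 2t$, and let $v$ lie strictly between $w_h(x)$ and $w$. We want some $z$ with $w_h(z)=v$ and $d_h(x,z)\le 2t$.

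The idea is to move from $x$ to $y$ one coordinate at a time, so that the weight changes by at most $2$ per step. Along the path, each intermediate vector stays within distance $d_h(x,y)$ of $x$, because the distance is additive over coordinates.

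Steps of size $2$ could skip the value $v$. In that case I would handle it directly: replace the coordinate value $2^{s-1}$ by $1$ (when $s\ge2$), which changes the weight by exactly $1$ and only moves closer to $x$. Alternatively, one can simply invoke the paper's remark right after Lemma \ref{tau} that the weight function satisfies the contiguous block condition.

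\textbf{Conclusion.} With both hypotheses in hand, Lemma \ref{tau} supplies a map $\tau^h_{w_h}$ into $[4t+1]$. Theorem \ref{reN_h} together with Theorem \ref{bound_lambdat} then gives an explicit $(w_h,t)$-FCCHD. Its redundancy vector $x_p$ is the codeword $C_{\tau(x)}$ from the construction in Theorem \ref{bound_lambdat} with $\lambda=4t+1$, so its length is
\[
\frac{(4t+1)t}{2}\ \text{ for $t$ even}, \qquad \frac{(4t+1)(t+1)-2}{2}\ \text{ for $t$ odd}.
\]
These are exactly the values in the statement, which follow from Corollary \ref{optimal-upperbound} by setting $\lambda=4t+1$.
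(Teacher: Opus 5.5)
Your proposal is correct and follows the paper's own route. The paper uses Proposition~\ref{proplocalhwt} with $\rho=2t$ to get a locally $(4t+1,2t)_h$-function, then Lemma~\ref{tau}, Theorem~\ref{reN_h}, and the explicit code of Theorem~\ref{bound_lambdat} with $\lambda=4t+1$. The one addition is that you actually justify the contiguous block condition, which the paper only asserts: your coordinate-by-coordinate path with the $2^{s-1}\to 1$ replacement works for $s\ge 2$, and for $s=1$ no fix is needed because every weight is even, so no value of $\text{Im}(w_h)$ can be skipped.
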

\begin{proof}
 By Corollary \ref{proplocalhwt}, $f$ is a locally $(4t+1,2t)_h$-function. The construction of $(f,t)$-FCCHD with desired redundancy follows from  Theorem \ref{bound_lambdat} and Theorem \ref{reN_h}.
\end{proof}
 \begin{corollary}
    Let $k, t, T$ be positive integers and $\Delta_T^h$ be the homogeneous weight distribution function over $\mathbb{Z}_{2^s}^k$ with threshold $T$.  Then there exists a $(\Delta_T^h,t)$-FCCHD with redundancy 
    \begin{align*}
        = \begin{cases}
            \frac{\left(\left \lfloor\frac{4t}{T}\right \rfloor+2\right) t}{2} & \text{if } t \text{ is even}\\
            \frac{\left(\left \lfloor\frac{4t}{T}\right \rfloor+2\right) (t+1)-2}{2} & \text{if } t \text{ is odd}.
        \end{cases}
    \end{align*}
\end{corollary}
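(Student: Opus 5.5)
The plan mirrors the proof of the preceding corollary for $w_h$: establish the locality of $\Delta_T^h$ at radius $\rho=2t$, then apply the construction of Theorem \ref{reN_h} with the explicit code from Theorem \ref{bound_lambdat}.

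First, apply Proposition \ref{lamhwd} with $\rho=2t$. This shows that $\Delta_T^h$ is a locally $\left(\left\lfloor\frac{4t}{T}\right\rfloor+2,2t\right)_h$-function. Set $\lambda=\left\lfloor\frac{4t}{T}\right\rfloor+2$.

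Next, check the contiguous block condition of Lemma \ref{tau} for the natural order on $\text{Im}(\Delta_T^h)\subseteq\mathbb{N}_0$. By the proof of Proposition \ref{proplocalhwt}, every $y\in B_h(x,2t)$ satisfies $w_h(x)-2t\leq w_h(y)\leq w_h(x)+2t$. The realized weights, however, fill a whole integer interval. Changing one coordinate of $x$ from a unit to $0$, or from $2^{s-1}$ to a unit, moves the weight by exactly $1$, and a change from $0$ to a unit does the same. So from $x$ one reaches every weight between $\max(0,w_h(x)-2t)$ and $\min(2k,w_h(x)+2t)$ by a path of unit-distance steps staying inside the ball. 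Taking floors after division by $T$ preserves contiguity, so $B^h_{\Delta_T^h}(x,2t)$ is a set of consecutive integers.

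With both hypotheses in place, Lemma \ref{tau} gives a map $\tau:\mathbb{Z}_{2^s}^k\to[\lambda]$. Theorem \ref{reN_h} then yields an $(\Delta_T^h,t)$-FCCHD with redundancy $N_h(\lambda,2t)$. Concretely, take the parity $x_p=C_{\tau(x)}$, where $C_1,\dots,C_\lambda$ is the explicit code from the proof of Theorem \ref{bound_lambdat}: blocks of $2^{s-1}$ of length $t/2$ when $t$ is even, and the staggered blocks when $t$ is odd. That code has length exactly $\frac{\lambda t}{2}$ or $\frac{\lambda(t+1)-2}{2}$ respectively, which gives the stated redundancy.

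The one point needing care is showing that the realized weight set is contiguous, since the triangle inequality alone gives only a containing interval. The single-coordinate unit-step argument above handles this, and the rest is direct substitution.
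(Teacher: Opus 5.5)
Your proposal is correct and follows the paper's route. The paper applies Proposition \ref{lamhwd} with $\rho=2t$ and then Theorems \ref{reN_h} and \ref{bound_lambdat}, asserting the contiguous block condition without proof. Inserting the explicit code of Theorem \ref{bound_lambdat} directly into the encoder is the right way to get redundancy exactly equal to the stated value, rather than merely $N_h(\lambda,2t)$.

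Your contiguity argument needs one small repair. The unit-step moves require $s\geq 2$, and for $s\geq 3$ you should allow any weight-$1$ symbol (e.g.\ $2\in\mathbb{Z}_8$), not only units, to be sent to $0$ at distance $1$. For $s=1$ the only nonzero symbol $1=2^{s-1}$ has weight $2$, so the realized weights are even and $B^h_{\Delta_T^h}(x,2t)$ need not be a set of consecutive integers (take $T=1$). It is nevertheless a contiguous block of $\text{Im}(\Delta_T^h)$, because the weights realized in the ball are exactly the elements of $\text{Im}(w_h)$ lying in $[w_h(x)-2t,\,w_h(x)+2t]$, and $w\mapsto\lfloor w/T\rfloor$ is monotone. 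That is all Lemma \ref{tau} needs.
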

\begin{proof}
    By Corollary \ref{lamhwd}, $\Delta_T^h$ is a locally $\left(\left \lfloor\frac{4t}{T}\right \rfloor+2,2t\right) $.  The construction of $(f,t)$-FCCHD with desired redundancy follows from  Theorem \ref{bound_lambdat} and Theorem \ref{reN_h}.
\end{proof}

\subsection{Modular sum function}
A modular sum function $ms:\mathbb{Z}_{2^s}^k\to \mathbb{Z}_{2^s}$ is defined as $ms(x)=\sum_{i=1}^kx_i$ for all $x=(x_1,x_2,\dots,x_k)\in \mathbb{Z}_{2^s}^k$. For modular sum function $f$, we have $E=|\text{Im}(f)|=2^s$. 
\begin{proposition}\label{redun-ms}
    Let $ms:\mathbb{Z}_{2^s}^k\to\mathbb{Z}_{2^s}$ be the modular sum function. Then $$r_{ms}^h(k,t)\geq 2t-2^{-s}(2t+1).$$
\end{proposition}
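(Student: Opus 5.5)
We will combine Corollary \ref{nbnh} with the Plotkin-like bound of Lemma \ref{plotkin}, applied to a carefully chosen family of $M=2^s$ messages whose modular sums are pairwise distinct and which are pairwise close in homogeneous distance. The bound $2t-2^{-s}(2t+1)$ has the form $\frac{1}{M^2}\sum_{i,j}[D]_{ij}$ with $M=2^s$ and every off-diagonal entry equal to $2t$ except for a small correction. This suggests taking the $2^s$ messages to be $x_a=(a,0,\dots,0)$ for $a\in\mathbb{Z}_{2^s}$.

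\textbf{Step 1 (choice of messages).} Set $x_a=(a,0,\dots,0)\in\mathbb{Z}_{2^s}^k$ for $a\in\mathbb{Z}_{2^s}$. Then $ms(x_a)=a$, so all $2^s$ function values are distinct. Moreover $d_h(x_a,x_b)=w_h(a-b)$, which is $1$ if $a-b\notin\langle 2^{s-1}\rangle$ and $2$ if $a-b=2^{s-1}$, by \eqref{hweight}.

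\textbf{Step 2 (distance requirement matrix).} For each ordered pair $(a,b)$ with $a\neq b$, the entry is $[D]_{ab}=2t+1-w_h(a-b)$, which is positive since $t\geq1$. The sum over all ordered pairs is
\[
\sum_{a\neq b}\bigl(2t+1-w_h(a-b)\bigr)=2^s(2^s-1)(2t+1)-2^s\sum_{c\neq 0}w_h(c).
\]
By the homogeneous weight normalization with $\alpha=1$, we have $\sum_{c\in\mathbb{Z}_{2^s}}w_h(c)=2^s$. Directly, $(2^s-2)\cdot1+2=2^s$. Hence the total is $2^s(2^s-1)(2t+1)-2^{2s}$.

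\textbf{Step 3 (apply Plotkin).} By Lemma \ref{plotkin} and Corollary \ref{nbnh},
\[
r_{ms}^h(k,t)\geq \frac{2^s(2^s-1)(2t+1)-2^{2s}}{2^{2s}}=(2t+1)(1-2^{-s})-1=2t-2^{-s}(2t+1),
\]
which is the claimed bound.

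The only delicate point is the weight-sum identity in Step 2 together with the bookkeeping of ordered versus unordered pairs. Lemma \ref{plotkin} sums over all ordered pairs $(i,j)$, so the count $2^s(2^s-1)$ must be used consistently. The argument also needs $k\geq1$ so that the vectors $x_a$ exist. Beyond that the computation is routine.
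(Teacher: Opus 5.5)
Your proposal is correct and follows essentially the same route as the paper: the paper takes the $2^s$ messages $(0,\dots,0,a)$ (varying the last coordinate rather than the first), computes $\sum_{i,j}[D]_{ij}=(M^2-M)(2t+1)-M^2$ using $\sum_{c}w_h(c)=2^s$, and then applies Lemma~\ref{plotkin} together with Corollary~\ref{nbnh}. Your explicit checks that every off-diagonal entry is positive (so the $[\cdot]^+$ is inactive) and that ordered pairs are counted consistently are details the paper leaves implicit.
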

\begin{proof}
Let $M=2^s$. Let $\{x_1,x_2,\dots,x_M\}\subseteq \mathbb{Z}_{2^s}^k$, where $x_i=(0,0,\dots,0,i-1)$ for all $1\leq i\leq M$. Note that $ms(x_i)=i-1$ and  $d_h(x_i,x_j)=d_h(i-1,j-1)$ $\forall i,j$. Thus, the homogeneous distance requirement matrix $D_{ms}^h(t,x_1,x_2,\dots,x_M)$ is given by 
\begin{align*}
    [D_{ms}^h(t,x_1,x_2,\dots,x_M)]_{ij}=\begin{cases}
        2t+1-d_h(i-1,j-1), & \text{ if } i\neq j\\
        0 & \text{ otherwise.}
    \end{cases}
\end{align*}
Consequently,
\begin{align*}
    \begin{split}
       \sum_{1\leq i,j\leq M}[D_{ms}^h(t,u_1,u_2,\dots,u_M)]_{ij}&=(M^2-M)(2t+1)-\sum_{1\leq i,j\leq M}d_h(i-1,j-1)\\
       &=(M^2-M)(2t+1)-\sum_{i=0}^{M-1}\sum_{j=0}^{M-1} w_h(i-j)\\
       &=(M^2-M)(2t+1)-M^2=2tM^2-2tM-M.
    \end{split}    
\end{align*}
By Lemma \ref{plotkin}, we have 
\begin{align*}
    N_h(D_{ms}^h(t,x_1,x_2,\dots,x_M))\geq \frac{1}{M^2}(2tM^2-2tM-M).
\end{align*}
Further, by Corollary \ref{nbnh}, we get the desired result.
\end{proof}
Next, we present a construction for the modular sum function.\\
\textbf{Construction\label{cons1}}: Let $f(x)=ms(x)$ be the modular sum function over $\mathbb{Z}_{2^s}^k$. Let $t$ be a positive integer. Define an encoding $Enc:\mathbb{Z}_{2^s}^k\to \mathbb{Z}_{2^s}^{k+2t}$, $Enc(x)=(x,x_p^{2t})$, where $x_p^{2t}$ is $2t$-fold repetition of $x_p$ and 
\begin{align*}
    x_p=\begin{cases}
        2f(x), & \text{ if } 0\leq f(x)\leq 2^{s-1}-1\\
        2f(x)+1, & \text{ if } 2^{s-1}\leq f(x)\leq 2^{s}-1.
    \end{cases}
\end{align*}
It is easy to see that $x_p\neq y_p$ whenever $f(x)\neq f(y)$ for $x,y\in \mathbb{Z}_{2^s}^k$. Therefore, for $x,y\in \mathbb{Z}_{2^s}^k$ with $f(x)\neq f(y)$, we have 
\begin{align*}
    d_h(Enc(x),Enc(y))=d_h(x,y)+d_h(x_p^{2t},y_p^{2t})\geq 2t+1.
\end{align*}
Hence, encoding map $Enc$ defines an $(f,t)$-FCCHD.\\
By the above construction and Proposition \ref{redun-ms}, we have the following lower and upper bounds on the optimal redundancy for the modular sum function.
\begin{corollary}
    Let $ms(x)$ be the modular sum function over $\mathbb{Z}_{2^s}^k$. Then
    \begin{align*}
        2t-2^{-s}(2t+1)\leq r_{ms}^h(k,t)\leq 2t.
    \end{align*}
 Moreover, if $t<\frac{2^s-1}{2}$, then $r_{ms}^h(k,t)=2t$.   
\end{corollary}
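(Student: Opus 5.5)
The plan is to combine the two results established just above the statement. The lower bound $r_{ms}^h(k,t)\geq 2t-2^{-s}(2t+1)$ is exactly Proposition \ref{redun-ms}. The upper bound $r_{ms}^h(k,t)\leq 2t$ comes from the Construction for the modular sum function, since it gives an $(ms,t)$-FCCHD of redundancy $2t$. The equality claim then follows from integrality of $r_{ms}^h(k,t)$.

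For the upper bound I will first verify the Construction carefully, since the paper only says ``it is easy to see''. Define $g:\mathbb{Z}_{2^s}\to\mathbb{Z}_{2^s}$ by $g(v)=2v$ for $0\leq v\leq 2^{s-1}-1$ and $g(v)=2v+1 \pmod{2^s}$ for $2^{s-1}\leq v\leq 2^s-1$.
\begin{itemize}
\item On the first range, $2v$ runs over the distinct even residues $0,2,\dots,2^s-2$.
\item On the second range, $2v+1$ reduced mod $2^s$ runs over the distinct odd residues $1,3,\dots,2^s-1$.
\end{itemize}
So $g$ is injective, and $f(x)\neq f(y)$ implies $x_p\neq y_p$. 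Every nonzero element of $\mathbb{Z}_{2^s}$ has homogeneous weight at least $1$ by \eqref{hweight}. Hence the $2t$-fold repetitions satisfy $d_h(x_p^{2t},y_p^{2t})\geq 2t$. Moreover $f(x)\neq f(y)$ forces $x\neq y$, so $d_h(x,y)\geq 1$. Adding the two contributions gives $d_h(Enc(x),Enc(y))\geq 2t+1$, and therefore $r_{ms}^h(k,t)\leq 2t$.

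For the equality part, assume $t<\frac{2^s-1}{2}$, which is equivalent to $2t+1<2^s$. Then $0<2^{-s}(2t+1)<1$, so the lower bound gives $r_{ms}^h(k,t)>2t-1$. Since $r_{ms}^h(k,t)$ is an integer, $r_{ms}^h(k,t)\geq 2t$, and together with the upper bound this gives $r_{ms}^h(k,t)=2t$.

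There is no real obstacle. The only points needing care are the injectivity of the parity map $g$ (in particular the wraparound modulo $2^s$ on the odd branch) and the use of $d_h(x,y)\geq 1$ in the distance estimate. One could also double-check the sum $\sum_{i,j}w_h(i-j)=M^2$ used in Proposition \ref{redun-ms}: each row contributes $\sum_{a\in\mathbb{Z}_{2^s}}w_h(a)=(2^s-2)\cdot 1+2=2^s$, which confirms that the lower bound is valid.
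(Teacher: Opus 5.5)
Your proposal is correct and follows the paper's argument exactly. The lower bound is Proposition~\ref{redun-ms}, and the upper bound comes from the $2t$-fold repetition construction; you verify the injectivity of its parity map, which the paper only calls ``easy to see''. The equality case follows from integrality, since $2t+1<2^s$ forces $2t-2^{-s}(2t+1)>2t-1$.
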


\section{Linear function-correcting codes in the homogeneous metric}\label{FCC-linear}
In \cite{Premlal2024} and \cite{Sampath2025}, the authors derived bounds on the optimal redundancy of function-correcting codes for linear functions in the Hamming and the $b$-symbol metric, respectively. In this section, we investigate function-correcting codes in the homogeneous metric (FCCHD) for linear functions defined over $\mathbb{Z}_{2^s}$. We establish an explicit Plotkin-type upper bound for this class of codes. Additionally, we discuss function-correcting linear codes over  $\mathbb{Z}_{2^s}$.

A  function $f:\mathbb{Z}_{2^s}^k\to \mathbb{Z}_{2^s}^l$ is said to be linear function if the following hold
\begin{align*}
    f(x+y)=f(x)+f(y) \text{ and } f(\alpha\cdot x)=\alpha f(x)\ \ \ \  \forall x,y\in \mathbb{Z}_{2^s}^k, \alpha\in \mathbb{Z}_{2^s}. 
\end{align*}
 The kernel of a linear function $f:\mathbb{Z}_{2^s}^k\to \mathbb{Z}_{2^s}^l$ is defined as
 $$\ker(f)=\{x\in \mathbb{Z}_{2^s}^k| f(x)=0\}.$$

\begin{theorem}\label{plotlinear}
 Let $m=2^s$ and $f:\mathbb{Z}_m^k\to \mathbb{Z}_m^\ell$ be an onto linear function. Then the optimal redundancy of an $(f,t)$-FCLMC is bounded by 
    \begin{align*}
        r_f^h(k,t)\geq (2t+1)(1-2^{-s\ell})-k+\frac{A}{2^{sk}},
    \end{align*}
    where  $A=\sum_{u\in \ker(f)}w_h(u)$.
\end{theorem}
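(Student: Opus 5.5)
Following \cite{Premlal2024}, the plan is a Plotkin-type averaging argument applied to the whole code $C=\{Enc(x)\mid x\in\mathbb{Z}_m^k\}$, which has length $n=k+r$ and $M=m^k=2^{sk}$ codewords. We compare an upper bound on the total pairwise homogeneous distance $S=\sum_{x,y}d_h(Enc(x),Enc(y))$ with a lower bound forced by the FCC condition.

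\textbf{Upper bound on $S$.} We reuse the column-wise argument in the proof of Lemma \ref{plotkin} (for general $s$), i.e.\ Eq.~\eqref{eqplot} in the proof of Theorem \ref{plotkinimp4}. For a column with symbol counts $n_a$, the contribution is $\sum_{a,b}n_an_bw_h(a-b)$. Since $w_h$ is homogeneous with $\alpha=1$, the average of $w_h$ over $\mathbb{Z}_{2^s}$ is $1$, and the quadratic form $\sum_{a,b}n_an_bw_h(a-b)$ is maximized at $M^2$ (this is exactly the content of Lemma \ref{plotkin}). Hence $S\le nM^2=(k+r)2^{2sk}$.

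\textbf{Lower bound on $S$.} Split ordered pairs $(x,y)$ according to whether $f(x)=f(y)$.
\begin{itemize}
\item If $f(x)\neq f(y)$, the FCC condition gives $d_h(Enc(x),Enc(y))\ge 2t+1$. Since $f$ is onto and linear, each fibre is a coset of $\ker f$ of size $2^{s(k-\ell)}$, so the number of such pairs is $M^2(1-2^{-s\ell})$.
\item If $f(x)=f(y)$, we only use $d_h(Enc(x),Enc(y))\ge d_h(x,y)=w_h(x-y)$ with $x-y\in\ker f$. For fixed $x$, the vector $x-y$ ranges over all of $\ker f$, so these pairs contribute at least $M\cdot A$.
\end{itemize}
This gives
\[
S\ge (2t+1)M^2(1-2^{-s\ell})+MA .
\]

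\textbf{Combining.} Comparing the two bounds and dividing by $M^2=2^{2sk}$ yields
\[
k+r\ge (2t+1)(1-2^{-s\ell})+\frac{A}{2^{sk}}.
\]
This is exactly the claimed bound, with the normalization $A/2^{sk}$ matching the statement.

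\textbf{Main obstacle.} The delicate step is the column bound $\sum_{a,b}n_an_bw_h(a-b)\le M^2$ for general $s$. We assume it from Lemma \ref{plotkin}. If one wants a self-contained argument, use the orthogonality of the homogeneous weight: $w_h(a)=1-\chi$-type averages, i.e.\ $w_h$ equals $1$ minus a positive semidefinite combination of characters. Indeed, $1-w_h$ is $1$ at $0$, $0$ on non-$\langle 2^{s-1}\rangle$ elements, and $-1$ at $2^{s-1}$, so it equals $\frac{1}{2^{s-1}}\sum_{\chi\text{ trivial on }2\mathbb{Z}_{2^s}\ldots}$. More concretely, $1-w_h$ is the indicator of $\langle 2^{s-2}\cdot 2\rangle$-type subgroups difference, which is PSD. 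Hence $\sum n_an_b(1-w_h(a-b))\ge 0$.
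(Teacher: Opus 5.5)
Your argument is correct and is essentially the paper's Plotkin-type double count over the cosets of $\ker(f)$. The paper applies Lemma~\ref{plotkin} only to the redundancy part, viewed as a $D_f^h$-code, and subtracts $\sum_{u\notin\ker f}w_h(u)=km^k-A$. You instead apply the same column bound to the whole length-$(k+r)$ code, and the two inequalities coincide because the message coordinates run over all of $\mathbb{Z}_m^k$ and so add exactly $kM^2$ to both sides of yours. Your optional self-contained sketch of the column bound is garbled as written: the relevant characters are those with $\chi(2^{s-1})=-1$, not those trivial on $2\mathbb{Z}_{2^s}$. It is not needed, though, and an elementary version is $\sum_{a,b}n_an_b\bigl(1-w_h(a-b)\bigr)=\sum_a n_a^2-\sum_a n_an_{a+2^{s-1}}=\tfrac12\sum_a\bigl(n_a-n_{a+2^{s-1}}\bigr)^2\ge 0$.
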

\begin{proof}
  Let $M=m^k$, $\mathbb{Z}_m^k=\{u_1,u_2,\dots,u_M\}$, and $D=D_f^h(t,u_1,u_2,\dots,u_M)$. Let $C=\{C_1,C_2,\dots,C_M\}\subseteq \mathbb{Z}_m^r$ be a $D$-code of length $r$, that is, $d_h(C_i,C_j)\geq [D]_{ij}$. Then, by Lemma \ref{plotkin}, we have 
  \begin{equation}\label{eq7.1}
     M^2r\geq\sum_{i,j}d_L(C_i,C_j)\geq \sum_{i,j}[D]_{ij}.
  \end{equation}
  % where \begin{align*}
  %     S=\begin{cases}
  %         \frac{m^2}{4}, & \text{ if } m \text{ is even}\\
  %           \frac{m^2-1}{4}, & \text{ if } m \text{ is odd.}
  %     \end{cases}
  % \end{align*}
Since $f$ is an onto linear function, therefore cardinality of $\ker(f)$ is $m^{k-\ell}$. Thus, each coset of $\ker(f)$ in $\mathbb{Z}_m^k$ contains $m^{k-\ell}$ elements. Observe that if $u_i,u_j$ belong to the same coset, then $f(u_i)=f(u_j)$. Consequently, $[D]_{ij}=0$, that is, each column of $D$ contains at least $m^{k-\ell}$ zeros. When $u_i,u_j$ are in different cosets, then $[D]_{ij}=[2t+1-d_h(u_i,u_j)]^+$. Let $0\neq v\in \mathbb{Z}_m^k$, $u_{i'}=u_i+v$ and $u_{j'}=u_j+v$. Then $d_h(u_i,u_j)=d_h(u_{i'},u_{j'})$. Therefore, each column (row) in $D$ contains the same set of entries, implying that the columns are permutations of one another. Thus, 
\begin{align*}
  \sum_{i,j} [D]_{ij}=(\text{no. of columns)}\times (\text{sum of one column)}.  
\end{align*}
Let $u_1=\bm{0}$ be the zero vector in $\mathbb{Z}_m^k$. We calculate the sum of the entries of the column corresponding to $u_1$. Let $I=\mathbb{Z}_m^k\setminus \ker(f)$. If $u_i\in \ker(f)$, then $[D]_{i1}=0$. For $i\in I$, $[D]_{i1}\geq 2t+1-d_h(u_i,\bm{0})=2t+1-w_L(u_i)$. Thus,
\begin{align*}
   \sum_{i}[D]_{i1}\geq (2t+1)(m^k-m^{k-\ell})-\sum_{i\in I} w_h(u_i). 
\end{align*}
The sum of homogeneous weights of all the vectors in $\mathbb{Z}_m^k$ is $km^{k}$. Thus, \begin{align*}
    \sum_{i\in I}w_h(u_i)=km^{k}-A, \text{ where } A=\sum_{u\in \ker(f)}w_h(u). 
\end{align*} 
Therefore,
\begin{align}\label{eq7.2}
\begin{split}
    \sum_{i,j} [D]_{ij}&\geq M\cdot  \sum_{i}[D]_{i1}\\
      &\geq m^k((2t+1)(m^k-m^{k-\ell})-km^{k}+A.
    \end{split}
\end{align}
The desired result follows from Eqs. \ref{eq7.1} and \ref{eq7.2}.
\end{proof}

\begin{corollary}
    Let $f:\mathbb{Z}_{2^s}^k\to \mathbb{Z}_{2^s}^\ell $ be a bijective linear function. Then an $(f,t)$-FCCHD reduces to a systematic classical error-correcting code with 
    \begin{align*}
       n= r_f^h(k,t)+k\geq (2t+1)\left(\frac{2^{sk}-1}{2^{sk}}\right).
    \end{align*}
\end{corollary}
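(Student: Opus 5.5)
The plan is to specialize Theorem \ref{plotlinear} to the case where $f$ is bijective, and to observe separately that the function-correcting condition becomes an ordinary minimum-distance condition.

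\textbf{Step 1: the FCCHD is a classical code.} Since $f$ is injective, $f(x)\neq f(y)$ holds exactly when $x\neq y$. So Definition \ref{fcchd} requires $d_h(\mathrm{Enc}(x),\mathrm{Enc}(y))\geq 2t+1$ for all distinct $x,y\in\mathbb{Z}_{2^s}^k$. Hence $\{\mathrm{Enc}(x)\}$ is a systematic code of length $n=k+r$ with $2^{sk}$ codewords and minimum homogeneous distance at least $2t+1$, i.e.\ a classical $t$-error-correcting code. Conversely, any such systematic code is an $(f,t)$-FCCHD, so the optimal redundancy equals the minimal redundancy of a systematic $t$-error-correcting code.

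\textbf{Step 2: identify the parameters in Theorem \ref{plotlinear}.} Bijectivity forces $|\mathbb{Z}_{2^s}^k|=|\mathbb{Z}_{2^s}^\ell|$, so $\ell=k$. It also gives $\ker(f)=\{\mathbf{0}\}$, which agrees with $|\ker(f)|=m^{k-\ell}=1$ as used in that proof. Therefore
\begin{align*}
A=\sum_{u\in\ker(f)}w_h(u)=w_h(\mathbf{0})=0 .
\end{align*}

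\textbf{Step 3: substitute.} Since $f$ is onto, Theorem \ref{plotlinear} applies and gives
\begin{align*}
r_f^h(k,t)\geq (2t+1)(1-2^{-sk})-k .
\end{align*}
Adding $k$ to both sides yields
\begin{align*}
n=r_f^h(k,t)+k\geq (2t+1)\frac{2^{sk}-1}{2^{sk}} ,
\end{align*}
which is the claimed bound. It is the classical Plotkin-type bound for a code of size $M=2^{sk}$ and minimum homogeneous distance $2t+1$, since the average homogeneous weight per coordinate is $1$.

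There is no real obstacle here. The only care needed is in Step 2: one must justify $\ell=k$ and $A=0$ from bijectivity, and make sure the onto hypothesis of Theorem \ref{plotlinear} is met, which it is because a bijection is onto.
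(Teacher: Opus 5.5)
Your proposal is correct and follows the paper's proof. Bijectivity gives $\ker(f)=\{\mathbf{0}\}$, so $A=0$, and substituting into Theorem~\ref{plotlinear} and adding $k$ yields the bound. Your explicit justification of $\ell=k$ (which the paper uses tacitly to write $2^{-sk}$) and of why the FCCHD condition becomes an ordinary minimum-distance condition fills in steps the paper leaves unstated.
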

\begin{proof}
Since $f$ is bijective, therefore, $\ker(f)=\{\bm{0}\}$. Consequently, $A=0$. The rest follows from Theorem \ref{plotlinear}.    
\end{proof}

\subsection{Function-correcting linear codes in homogeneous metric}
In this subsection, we discuss function-correcting linear codes over $\mathbb{Z}_{2^s}$ for the homogeneous metric.
\begin{definition}
A code $C$ of length $n$ over $\mathbb{Z}_{2^s}$ is said to be a linear code if it is a $\mathbb{Z}_{2^s}$-submodule of $\mathbb{Z}_{2^s}^n$.
\end{definition}

Let $C$ be a linear code over $\mathbb{Z}_{2^s}$ of length $n$. We call a matrix $G_{m\times n}$ a generator matrix for $C$ if the rows of $G$ generate $C$. In \cite{Calderbank1995}, Calderbank and Sloane presented a form of generator matrix for linear codes over $\mathbb{Z}_{2^s}$.
Let $C$ be a linear code of length $n$ over $\mathbb{Z}_{2^s}$. Then  generator matrix for $C$ is given by 
\begin{align*}
    G=\begin{bmatrix}
        I_{k_0} & A_{01} & A_{01} & \dots & A_{0s-1} & A_{0s}\\
        \bm{0} & 2I_{k_1} & 2A_{12} & \dots & 2A_{1s-1} & 2A_{1s}\\
        \bm{0} &\bm{0} & 2^2I_{k_2}& \dots & 2^2A_{2s-1} & 2^2A_{2s}\\
        \vdots &\vdots &\vdots &\vdots &\vdots &\vdots\\
        \bm{0} &\bm{0} &\bm{0} &\dots& 2^{s-1}I_{k_{s-1}} & 2^{s-1}A_{s-1s}
    \end{bmatrix},
\end{align*}
where $A_{ij}$  are matrices over $\mathbb{Z}_{2^s}$ and the columns are grouped into blocks of sizes $k_0,k_1,\dots, k_{s-1}$. Let $k=\sum_{i=0}^{s-1}(s-i)k_i$. Then $2$-dim$(C)=k$, that is, $|C|=2^k$.

\begin{definition}
Let $f:\mathbb{Z}_{2^s}^k\to \text{Im}(f)$ be a function. An $(f,t)$-function-correcting code $C$ with homogeneous distance is called a linear FCCHD with redundancy $r$ if $C$ is a linear code of length $k+r$ over $\mathbb{Z}_{2^s}$. 
\end{definition}

% \begin{lemma}
%     Let $f:\mathbb{Z}_{2^s}^k\to \mathbb{Z}_{2^s}^\ell $ be a linear function and $C$ be a linear $(f,t)$-FCCHD. Consider the subcode 
%     $C_{\ker}=\{c=(u,p(u))\in C\ | \ u\in \ker(f)\}$. Then $C_{\ker}$ is a linear code. Moreover, $w_h(c)\geq 2t+1$ for all $c\in C\setminus C_{\ker}$.
% \end{lemma}
% \begin{proof}
% Let $c_1=(u,p(u))$, $c_2=(v,p(v))\in C_{\ker}$. Then $u,v\in \ker(f)$. Since $\ker(f)$ is submodule of $\mathbb{Z}_{2^s}^k$, therefore $au+bv\in \ker(f)$ for any $a,b\in \mathbb{Z}_{2^s}$. Thus, $ac_1+bc_2=(au+bv,ap(u)+bp(v))\in C_{\ker}$. Hence, $C_{\ker}$ is a linear code.  
% Let $c\in C\setminus C_{\ker}$. Then $c=(u,p(u))$ for some $u\notin \ker(f)$ and $f(u)\neq f(0)$. By definition of $(f,t)$-FCCHD, $d_h(c,\bm{0})=w_h(c)\geq 2t+1$.
% \end{proof}

    Let  $f:\mathbb{Z}_{2^s}^k\to \mathbb{Z}_{2^s}^\ell$ be a linear function and  $C$ be a linear code of length $r$ with a generator matrix $G_{\ell \times r}$. Define 
    $$C_f=\{(u,f(u)\cdot G)\ |\ u\in \mathbb{Z}_{2^s}^k\}.$$
     It is easy to see that if the homogeneous distance of $C$ satisfies $d_h(C)\geq 2t$, then code $C_f$  defines an $(f,t)_h$-function-correcting linear code with redundancy $r$. Note that the construction will not give a function-correcting linear code if the function $f$ is not linear or the code $C$ is not linear.

\begin{example}
Let $s=2$ and $f:\mathbb{Z}_{2^s}^3\to \mathbb{Z}_{2^s}^2$ defined by $f(u_1,u_2,u_3)=(u_1+u_2,u_2+u_3)$ for all $(u_1,u_2,u_3)\in \mathbb{Z}_{2^s}^3$. Let $C$ be a linear code over $\mathbb{Z}_{2^s}$ generated by matrix 
\begin{align*}
    G=\begin{bmatrix}
        2 & 2 & 0\\
        0 & 2 &2
    \end{bmatrix}.
\end{align*}
 Note that $d_h(C)=4$. Then 
$C_f=\{(u,f(u)\cdot G)\ |\ u\in \mathbb{Z}_{2^s}^k\}$
is a linear code of length $6$ and
defines an $(f,2)_h$-function correcting linear code with redundancy $3$. 
\end{example}

\section{Conclusion}\label{sectionconclusion}
In \cite{liu2025function}, a general framework of function-correcting codes in the homogeneous metric over $\mathbb{Z}_{2^s}$ is introduced. In this paper, we studied function-correcting codes for weight functions, modular sum function, and families of functions, namely, locally bounded and linear functions in the homogeneous metric over $\mathbb{Z}_{2^s}$. First, we analyzed the locality behavior of the homogeneous weight function and its associated weight-distribution functions. We then established lower and upper bounds on the optimal redundancy for locally bounded functions and provided explicit constructions.  For linear functions, we derived an explicit Plotkin-type bound for function-correcting codes in the homogeneous metric, showing that it reduces to the classical Plotkin bound when the linear function is bijective, and we further discussed constructions of function-correcting linear codes over $\mathbb{Z}_{2^s}$. In \cite{Rajput2026}, function-correcting codes with data protection were introduced in the Hamming-metric setting. A natural and promising direction is to extend this notion to the homogeneous metric, where the underlying ring structure of $\mathbb{Z}_{2^s}$ may yield new constructions, sharper bounds, and potentially more efficient constructions for simultaneous function recovery and data protection.

\bibliographystyle{abbrv}
	\bibliography{ref}

\end{document}